\newcommand{\qed}{\rule{1.5mm}{2mm}\vspace{0.1in}}
\newenvironment{proof}{\par\noindent{\bf Proof:}}{\qed}
\newcommand{\remove}[1]{}
\newcommand{\myprob}{\Pr}
\newcommand{\ignore}[1]{}%
\newtheorem{theorem}{Theorem}[section]
\newtheorem{lemma}[theorem]{Lemma}
\newtheorem{corollary}[theorem]{Corollary}
\newtheorem{proposition}[theorem]{Proposition}
\newtheorem{claim}[theorem]{Claim}
\newtheorem{definition}[theorem]{Definition}
\begin{document}

\title{Non-Price Equilibria in Markets of Discrete Goods\\
\textmd{(working paper)}
}

\author{
Avinatan Hassidim\thanks{Google, Tel Aviv.}\and Haim
Kaplan\thanks{Google, Tel Aviv, and the School of Computer science, Tel Aviv University.
 This
research was supported in part
 by a grant from the Israel
Science Foundation (ISF), by a grant from United States-Israel Binational
Science Foundation (BSF). Email:haimk@cs.tau.ac.il.} \and Yishay Mansour\thanks{School of Computer
science, Tel Aviv University. Email:mansour@cs.tau.ac.il. This
research was supported in part
 by a grant from the the
Science Foundation (ISF), by a grant from United States-Israel Binational
Science Foundation (BSF), by a grant from the Israeli Ministry
of Science (MoS), and by the Google Inter-university center
for Electronic Markets and Auctions.}\and Noam Nisan\thanks{School of Computer Science
and Engineering, Hebrew University of Jerusalem. Supported by a
grant from the Israeli Science Foundation (ISF), and by the Google Inter-university center
for Electronic Markets and Auctions.}
%
}

\maketitle

\thispagestyle{empty}
\setcounter{page}{1}
\begin{abstract}
We study markets of indivisible items in which price-based
(Walrasian) equilibria often do not exist due to the discrete
non-convex setting.  Instead we consider Nash equilibria of the
market viewed as a game, where players bid for items, and where the
highest bidder on an item wins it and pays his bid. We first observe
that pure Nash-equilibria of this game excatly correspond to
price-based equilibiria (and thus need not exist), but that
mixed-Nash equilibria always do exist, and we analyze their
structure in several simple cases where no price-based equilibrium
exists. We also undertake an analysis of the welfare properties of
these equilibria showing that while pure equilibria are always
perfectly efficient (``first welfare theorem''), mixed equilibria
need not be, and we provide upper and lower bounds on their amount
of inefficiency.
\end{abstract}

\section{Introduction}

\subsection{Motivation}
The basic question that Economics deals with is how to ``best''
allocate scarce resources.  The basic answer is that trade can
improve everyone's welfare, and will lead to a market equilibrium: a
vector of resource prices that ``clear the market'' and lead to an
efficient allocation.  Indeed, Arrow and Debreu \cite{ArrowD54} and
much further work shows that such market equilibria exist in general
settings.

Or do they...?  An underlying assumption for the existence of
price-equilibria is always some notion of ``convexity''.  While some
may feel comfortable with the {\em restriction} to ``convex
economies'', markets of discrete items -- arguably the main object
of study in computerized markets and auctions -- are only rarely
``convex'' and indeed in most cases do {\em not} have any
price-based equilibria.  What can we predict to happen in such
markets?  Will these outcomes be efficient in any sense?  In this
paper we approach this questions by viewing the market as a game,
and studying its Nash-equilibria.

\subsection{Our Model}
To focus on the basic issue of lack of price-based equilibria, our
model does not address informational issues, assumes a single
seller, and does not assume any budget constraints.

Our seller is selling $m$ heterogeneous indivisible items to $n$
buyers who are competing for them.  Each buyer $i$ has a valuation
function $v_i$ specifying his value for each subset of the items.
I.e. for a subset $S$ of the items $v_i(S)$ specifies the value for
that buyer if he gets exactly this subset of the items, expressed in
some currency unit (i.e., the buyers are quasi-linear).  We will
assume free disposal, i.e., that the $v_i$'s are monotonically
non-decreasing, but nothing beyond that.

The usual notion of price-based equilibrium in this model is called
a Walrasian equilibrium: a set of item prices $p_1 \ldots p_m$ and a
partition $S_1\ldots S_n$ of the $m$ items among the $n$ buyers such
that each buyer gets his ``demand'' under these prices, i.e., $S_i
\in argmax_S (v_i(S)-\sum_{j \in S}p_j)$.  When such equilibria
exist they maximize social welfare, $\sum_i v_i(S_i)$, but
unfortunately it is known that they only rarely exist -- exactly
when the associated integer program has no integrality gap (see
\cite{BlunrosenNisan07} for a survey).

We will consider this market situation as a {\em game} where each
player\footnote{We use interchangeably the terms: player, bidder and
buyer, and all three have the same meaning.} $i$ announces $m$
offers $b_{i1}, \ldots b_{im}$, with the interpretation that
$b_{ij}$ is player $i$'s bid of item $j$. After the offers are made,
$m$ independent first price auctions are being made. That is the
utility of each bidder $i$ is given by $u_i(b) = v_i(S_i) - \sum_{j
\in S_i} b_{ij}$ where $S_1 \ldots S_n$ are a partition of the $m$
items with the property that each item went to a highest bidder on
it.  Some care is needed in the case of ties -- two (or more)
bidders $i \ne i'$ that place the highest bid $b_{ij}=b_{i'j}$ for
some item $j$. In this case a {\em tie breaking rule} is needed to
complete the specification of the allocation and thus of the game.
Importantly, we view this as a {\em game with complete information},
so each player knows the (combinatorial) valuation function of each
other player.

\subsection{Pure Nash Equilibrium}
Our first observation is that the pure equilibria of this game capture
exactly the Walrasian equilibria of the market.  This justifies our
point of view that when we later allow mixed-Nash equilibria as
well, we are in fact strictly generalizing the notion of
price-equilibria.

\vspace{0.1in} \noindent {\bf Theorem:} Fix a profile of valuations.
Walrasian equilibria of the associated market are in 1-1
correspondence with pure Nash equilibria of the associated game.
This holds in the exact sense for {\em some} tie-breaking rule, and
holds in the sense of limits of $\epsilon$-Nash equilibria for {\em
all} ties-breaking rules. \vspace{0.1in}

A profile of strategies (bids) in the game is called a ``limit of
$\epsilon$-Nash equilibria'' if for every $\epsilon>0$ there exists
a sequence of $\epsilon$-Nash equilibria that approach it.

Let us demonstrate this theorem with a trivial example: a single
item on sale and two bidders who have values of 1 and 2 respectively
for it.  A Walrasian equilibrium can fix the item's price $p$
anywhere between 1 and 2, at which point only the second bidder
desires it and the market clears. In the associated game (with any
tie breaking rule), a bid $p$ for the first player and bid
$p+\epsilon$ for the second player will be an
$\epsilon$-Nash-equilibrium.  In the special case that the tie
breaking rule gives priority to the second bidder, an exact
pure-Nash equilibrium will have both bidders bidding $p$ on the
item.

This theorem is somewhat counter intuitive as strategic (non-price-taking)
buyers in markets may improve their utility by strategically
``reducing demand''.  Yet, in our setting strategic buyers still
reach the basic non-strategic price-equilibrium.

As an immediate corollary of the fact that a Walrasian equilibrium
optimizes social welfare (``The first welfare theorem''), we get the
same optimality in our game setting:

\vspace{0.1in} \noindent {\bf Corollary -- A ``First Welfare
Theorem''} For every profile of valuations and every tie-breaking
rule, every pure Nash equilibrium of the game (including a limit of
$\epsilon$-equilibria) optimizes social welfare.  In other words,
the price of anarchy of pure Nash equilibria is trivial.
\vspace{0.1in}

\subsection{Mixed Nash Equilibria}
As mentioned above, since Walrasian equilibria only rarely exists,
so do only rarely exist pure Nash equilibria in our games.  So it is
quite natural to consider also the standard generalization,
Mixed-Nash equilibria of our market games.  The issue of existence
of such mixed Nash equilibria is not trivial in our setting as
buyers have a continuum of strategies and discontinuous utilities so
Nash's theorem does not apply.  Nevertheless, there has been a
significant amount of economic work on these types of settings and a
theorem of Simon and Zame \cite{SimonZame90} provides at least a partial
general positive answer:

\vspace{0.1in} \noindent {\bf Corollary} (to a theorem of \cite{SimonZame90}):
For every profile of valuations, there exists some (mixed)
tie-breaking rule such that the game has a mixed-Nash equilibrium.
\vspace{0.1in}

It seems that, like in the case of pure equilibria, an
$\epsilon$-Nash equilibrium should exist for all tie breaking rules,
but we have not been able to  establish this.

Once existence is established, we turn our attention towards
analyzing what these mixed equilibria look like.  We start with the
two basic examples that are well known not to have a price
equilibria:

\vspace{0.1in} \noindent {\bf Example -- Complements and Substitutes
Bidders:} In this example there are two items and two bidders.  The
first bidder (``OR bidder'') views the two items as perfect
substitutes and has value of $v_{or}$ for either one of them (but is
not interested in getting both).  The second bidder (``AND bidder'')
views them as complements and values the bundle of both of them at
$v_{and}$ (but is not interested in either of them separately).  It
is not difficult to see that when $v_{and} < 2v_{or}$ no pure
equilibrium exists, however we find specific distributions $F$ and
$G$ for the bids of the players that are in mixed-Nash equilibrium.

\vspace{0.1in} \noindent {\bf Example -- Triangle:} In this example
there are three items and three players.  Each of the players is
interested in a specific pair of items, and has value 1 for that
pair, and 0 for any single item, or any other pair.  A pure Nash
equilibrium does not exist, but we show that the following is a
mixed-Nash equilibrium: each player picks a bid $x$ uniformly at
random in the range $[0,1/2]$ and bids this number on each of the
items. Interestingly the expected utility of each player is zero.
We generalize the analysis to the case of single minded players,
each desiring a set of size $k$, each item is desired by $d$
players, and no two players' sets intersect in at most a single
item.
\vspace{0.1in}

We generalize our analysis to more general examples of these veins.
In particular, these provide examples where the mixed-Nash
equilibrium is not optimal in terms of maximizing social welfare and
in fact is far from being so.

\vspace{0.1in} \noindent {\bf Corollary -- A ``First Non-Welfare
Theorem'':} There are profiles of valuations where a mixed-Nash
equilibrium does not maximize social welfare.  There are examples
where pure equilibria (that maximize social welfare) exist and yet a
mixed Nash equilibrium achieves only $O(1\sqrt{m})$ fraction of
social welfare (i.e., ``price of anarchy'' is $\Omega(\sqrt{m})$).
There exist examples where all mixed-Nash equilibria achieve at most
$O(\sqrt{(\log m)/m})$ fraction of social welfare (i.e., ``price of
stability'' is $\Omega(\sqrt{m/(\log m)})$).

At this point it is quite natural to ask how much efficiency can be
lost, in general, as well for interesting subclasses of valuations,
which we answer as follows.

\vspace{0.1in} \noindent {\bf Theorem -- An ``Approximate First
Welfare Theorem'':} For every profile of valuations, every
tie-breaking rule, and every mixed-Nash equilibrium of the game we
have that the expected social welfare obtained at the equilibrium is
at least $1/\alpha$ (the ``price of anarchy'') times the optimal
social welfare, where
\begin{enumerate}
\item $\alpha \le 2\beta$ if all valuations $\beta$-fractionally subadditive.
(The case $\beta=1$ correponds to fractionally subadditive valuations, also known
as XOS valuations. They include the set of sub-modular valuations.)
\item $\alpha = O(\log m)$ if all valuations are sub-additive.
\item $\alpha = O(m)$, in general.
\end{enumerate}
These bounds apply also to correlated-Nash equilibria and even to
coarse-correlated equilibria. \vspace{0.1in}

A related PoA result is that of \cite{BhawalkarR11} which derive PoA
for $\beta$-fractionally sub-additive bidders in a second price
simultaneous auction under the assumption of ``conservative
bidding.'' In this work we use the first price (rather than the
second price) and do not make any assumption regarding the bidding.

 \ignore{ The first
two parts of this theorem are somewhat similar to the analysis of
\cite{BhawalkarR11} who prove similar bounds when the payments are
according to the second-price (rather than the bids themselves), and
under the additional assumption of ``conservative bidding'' -- an
assumption which we do not require.}

Finally  we extend these results also
 to a Bayesian setting where players have only partial information
on the valuations of the other players. We show that for any prior
distribution on the valuations and in every Bayesian Nash
equilibrium, where each player bids only based on his own valuation
(and the knowledge of the prior), the average social welfare is
lower by at most $\alpha=O(mn)$ than the optimal social welfare
achieved with full shared knowledge and cooperation of the players.
For a prior which is a product distribution over valuations which
are $\beta$-fractionally sub-additive we show that $\alpha =
4\beta$, which implies a bound of $4$ for sub-modular valuations and
a bound of $O(\log m)$for sub-additive valuations. Our proof
methodology for this setting is similar to that of
\cite{BhawalkarR11}.

\subsection{Open Problems and Future Work}
We consider our work as a first step in the systematic study of notions
of equilibrium in markets where price equilibria do not exist.  Our
own work focused on the mixed-Nash equilibrium,  its existence and
form, and its welfare properties.  It is certainly natural to
consider other properties of such equilibria such as their revenue
or invariants over the set of equilibria.  One may also naturally
study other notions of equilibrium such as those corresponding
outcomes of natural dynamics (e.g. coarse correlated equilibria
which are the outcome of regret minimization dynamics).  It is also
natural to consider richer models of markets (e.g. two-sided ones,
non-quasi-linear ones, or ones with partial information).

Even within the modest scope of this paper, there are several
remaining open questions: the existence of mixed-Nash equilibrium
under any tie-breaking rule; the characterization of all equilibria
for the simple games we studied; and closing the various gaps in our
price of anarchy and price of stability results.

\section{Model}

We have a set $M$ of $m$ heterogeneous indivisible items for sale to
a set $N$ of $n$ bidders.  Each bidder $i$ has a valuation function
$v_i$ where for a set of items $S \subseteq M$, $v_i(S)$ is his
value for receiving the set $S$ of items.  We will not make any
assumptions on the $v_i$'s except that they are monotone non
decreasing (free disposal) and that $v_i(\emptyset)=0$. We assume
that the utility of the bidders is quasi-linear, namely, if bidder
$i$ gets subset $S_i$ and pays $p_i$ then
$u_i(S_i,p_i)=v_i(S_i)-p_i$.

We will consider this market situation as a game where the items are
sold in simultaneous first price auctions. Each bidder $i\in N$
places a bid $b_{ij}$ on each each item $j\in M$, and the highest
bidder on each item gets the item and pays his bid on the item.  We
view this as a {\em game with complete information}.  The utility of
each bidder $i$ is given by $u_i(b) = v_i(S_i) - \sum_{j \in S_i}
b_{ij}$ where $S_1 ... S_n$ are a partition of the $m$ items with
the property that each item went to the bidder that gave the highest
bid for it.

Some care is required in cases of ties, i.e., if for some bidders $i
\ne i'$ and  an item $j\in M$ we have that $b_{ij}=b_{i'j}$ are both
highest bids for item $j$.  In these cases the previous definition
does not completely specify the allocation, and to complete the
definition of the game we must specify a tie breaking rule that
chooses among the valid allocations. (I.e. specifies the allocation
$S_1, \ldots , S_n$ as a function of the bids.) In general we allow
any tie breaking rule, a rule that may depend arbitrarily on all the
bids.  Even more, we allow randomized (mixed) tie breaking rules in
which some distribution over deterministic tie breaking rules is
chosen. We will call any game of this family (i.e.,with any tie
breaking rule) a ``first price simultaneous auction game'' (for a
given profile of valuations).

\section{Pure Nash Equilibrium}
\ignore{ We are now dealing with $m$ heterogeneous indivisible items
for sale among $n$ bidders.  Each bidder $i$ has a valuation
function $v_i$ where for a set of items $S \subseteq \{1...m\}$,
$v_i(S)$ is his value for receiving the set $S$ of items.  We will
not make any assumptions on the $v_i$'s except that they are
monotone non decreasing and that $v_i(\emptyset)=0$.}

The usual analysis of this scenario considers a market situation and
a price-based equilibrium:
\begin{definition}
A partition of the items $S_1 ... S_n$ and a non-negative vector of
prices $p_1 ... p_m$ are called a Walrasian equilibrium if for every
$i$ we have that $S_i \in argmax_S (v_i(S)-\sum_{j \in S} p_i)$.
\end{definition}
\ignore{ {\bf Comment:} In our definition we require that the
$S_i$'s are a partition; an equivalent definition would also allow
un-allocated items as long as they are priced at zero. {\tt [[YM:
with free disposal this is redundant]]} }

We consider bidders participating in a  simultaneous first price
auction game, with some tie breaking rule.

Our first observation is that pure equilibria of a first price
simultaneous auction game correspond to Walrasian equilibiria of the
market.  In particular the price of anarchy of pure equilibria is 1.
\begin{proposition} \label{proposition:pure}
\begin{enumerate}
\item
A profile of valuation functions $v_1 ... v_n$ admits a Walrasian
equilibrium with given prices and allocation if and only if the first price
simultaneous auction game
for these valuations  has a pure Nash equilibrium for some tie
breaking rule with these winning prices and allocation.
\item
Every pure Nash equilibrium of a first price simultaneous auction
game achieves optimal social welfare.
\end{enumerate}
\end{proposition}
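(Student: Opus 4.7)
The plan is to establish the bijection of part (1) by exhibiting two natural constructions (Walrasian equilibrium to bid profile, and vice versa), and then derive part (2) immediately from part (1) together with the classical first welfare theorem for Walrasian equilibria.

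For the forward direction of part (1), given a Walrasian equilibrium with prices $p_1,\ldots,p_m$ and partition $S_1,\ldots,S_n$, I would construct the bid profile in which every bidder $i$ bids $b_{ij}=p_j$ on every item $j$, and pick the tie-breaking rule that, on item $j$, awards $j$ to the bidder $i$ with $j\in S_i$. At this profile the winning bids are exactly $p_j$ and the allocation is $S_1,\ldots,S_n$. To verify Nash, fix any deviation $b'_i$ by bidder $i$ and let $T_i$ be the set he wins against the unchanged bids of the others. Since every other bidder bids $p_j$, winning item $j$ forces $b'_{ij}\ge p_j$, so his utility after the deviation is at most $v_i(T_i)-\sum_{j\in T_i}p_j$, which by the Walrasian argmax condition is at most $v_i(S_i)-\sum_{j\in S_i}p_j$.

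For the reverse direction, start from any pure Nash equilibrium under some tie-breaking rule, let $S_i$ be the set bidder $i$ wins, and let $p_j$ be the highest bid on item $j$ (so the winner of $j$ bids exactly $p_j$). To prove $(p,S)$ is Walrasian, I would use a strict-overbid deviation: for any bidder $i$ and any candidate set $T\subseteq M$, consider the deviation in which $i$ bids $p_j+\epsilon$ for $j\in T$ and $0$ elsewhere. This strictly outbids every other bidder on each $j\in T$, so $i$ wins exactly $T$ regardless of the tie-breaking rule. The Nash inequality gives $v_i(S_i)-\sum_{j\in S_i}p_j \ge v_i(T)-\sum_{j\in T}(p_j+\epsilon)$, and letting $\epsilon\to 0$ yields the Walrasian argmax condition. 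Part (2) is then immediate: any pure Nash equilibrium corresponds, by part (1), to a Walrasian equilibrium with the same allocation, which maximizes social welfare by the first welfare theorem.

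The only point that requires care is the handling of ties. The forward direction exploits the freedom to choose the tie-breaking rule (hence the phrasing ``for some tie breaking rule'' in the statement), whereas the reverse direction sidesteps tie-breaking entirely by using strict $\epsilon$-overbids that win each target item outright; this is why the reverse implication works uniformly across all tie-breaking rules.
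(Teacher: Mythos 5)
Your proposal is correct and follows essentially the same route as the paper: the forward direction uses the all-bidders-bid-$p_j$ profile with tie-breaking chosen to realize the Walrasian allocation, the reverse direction uses the $p_j+\epsilon$ overbid deviation on a candidate bundle $T$, and part (2) follows from the first welfare theorem. The only (harmless) detail neither treatment dwells on is that the deviating bidder might also pick up zero-priced items outside $T$, which by free disposal only strengthens the required inequality.
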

\begin{proof}
Let $S_1, \ldots , S_n$ and $p_1, \ldots , p_m$ be a Walrasian
equilibrium. Consider the bids where $b_{ij}=p_j$ for all $j$ and
let the game break ties according to $S_1...S_n$.  Why are these
bids a pure equilibrium of this game?  Since we are in a Walrasian
equilibrium, each player gets a best set for him under the prices
$p_j$.  In the game, given the bids of the other players, he can
never win any item for strictly less than $p_j$, whatever his bid,
and he does wins the items in $S_i$ for price $p_j$ exactly, so his
current bid is a best response to the others\footnote{The reader may
dislike the fact that the bids of loosing players seem artificially
high and indeed may be in weakly dominated strategies. This however
is unavoidable since, as we will see in the next section,
counter-intuitively sometimes there are no pure equilibria in
un-dominated strategies.  What can be said is that minimal Walrasian
equilibria correspond to pure equilibria of the game with strategies
that are limits of un-dominated strategies.}.

Now fix a pure Nash equilibrium of the game with a given tie
breaking rule.  Let $S_1, \ldots , S_n$ the allocation specified by
the tie breaking rule, and let $p_j = \max_i b_{ij}$ for all $j$. We
claim that this is a Walrasian equilibrium.  Suppose by way of
contradiction that some player $i$ strictly prefers another bundle
$T$ under these prices.  This contradicts the original bid of $i$
was a best reply since the deviation bidding $b_{ij}=p_j+\epsilon$
for $j \in T$ and $b_{ij}=0$ for $j \not\in T$ would give player $i$
the utility from $T$ (minus some $\epsilon$'s) which would be more
than he currently gets from $S_i$ -- a contradiction.

The allocation obtained by the game, is itself the allocation in a
Walrasian equilibrium, and thus by the First Welfare Theorem is a
social-welfare maximizing allocation.
\end{proof}

Two short-comings of this proposition are obvious: first is the
delicate dependence on tie-breaking: we get a Nash equilibrium only
for some, carefully chosen, tie breaking rule. In the next section
we will show that this is un-avoidable using the usual definitions,
but that it is not a ``real'' problem: specifically we show that for
any tie-breaking rule we get arbitrarily close to an equilibrium.

The second short-coming is more serious: it is well known that
Walrasian equilibria exist only for restricted classes of valuation
profiles\footnote{When all valuations are ``substitutes''.}.  In the
general case, there is no pure equilibrium and thus the result on
the price of anarchy is void. In particular, the result does {\em
not} extend to mixed Nash equilibria and in fact it is not even
clear whether such mixed equilibria exist at all since Nash's
theorem does not apply due to the non-compactness of the space of
mixed strategies.  This will be the subject of the the following
sections.

\subsection{Tie Breaking and Limits of $\epsilon$-Equilibria}

This subsection shows that the quantification to some tie-breaking rule
in the previous theorem is unavoidable. Nevertheless we argue that
it is  really just a technical issue since
 we can show that for every tie breaking rule there is
a limit of $\epsilon$-equilibria.

\subsubsection*{A first price auction with the wrong tie breaking
rule}
Consider the full information game describing a first price auction
of a single item between Alice, who has a value of 1 for the item,
and Bob who values it at 2, where the bids, $x$ for Alice and $y$
for Bob, are allowed to be, say, in the range $[0,10]$. The full
information game specifying this auction is defined by $u_A(x,y)=0$
for $x<y$ and $u_A(x,y) = 1-x$ for $x>y$, and $u_B(x,y)=2-y$ for
$x<y$ and $u_B(x,y)=0$ for $x>y$. Now comes our main point: how
would we define what happens in case of ties?  It turns out that
formally this ``detail'' determines whether a pure Nash equilibrium
exists.

Let us first consider the case where ties are broken in favor of
Bob, i.e., $u_B(x,y)=2-y$ for $x=y$ and $u_A(x,y)=0$ for $x=y$.  In
this case one may verify that $x=1,y=1$ is a pure Nash
equilibrium\footnote{The bid $x=1$ is weakly dominated for Alice.
Surprisingly, however, there is no pure equilibrium in un-dominated
strategies: suppose that some $y$ is at equilibrium with an
un-dominated strategy $x<1$. If $y ge 1$ then reducing $y$ to $y=x$
would still make Bob win, but at a lower price.  However, if $y<1$
too, then the loser can win by bidding just above the current winner
-- contradiction.}.

Now let us look at the case that ties are broken in favor of Alice,
i.e $u_A(x,y)= 1-x$ and $u_B(x,y)=0$ for $x=y$. In this case no pure
Nash equilibrium exists: first no $x \ne y$ can be an equilibrium
since the winner can always reduce his bid by $\epsilon$ and still
win, then if $x=y>1$ then Alice would rather bid $x=0$, while if
$x=y<2$ then Bob wants to deviate to $y + \epsilon$ and to win,
contradiction.

This lack of pure Nash equilibrium doesn't seem to capture the
essence of this game, as in some informal sense, the "correct" pure
equilibrium is $(x=1, y=1+\epsilon)$ (as well as $(x=1-\epsilon,
y=1)$), with Bob winning and paying $1+\epsilon$ ($1$).  Indeed
these are $\epsilon$-equilibria of the game.  Alternatively, if we
discretize the auction in any way allowing some minimal $\epsilon$
precision then bids close to $1$ with minimal gap would be a pure
Nash equilibrium of the discrete game. We would like to formally
capture this property: that $x=1$, $y=1$ is arbitrarily close
 to an equilibrium.

\subsubsection*{Limits of $\epsilon$-Equilibria}
We will become quite abstract at this point and consider general
games with (finitely many) $n$ players whose strategy sets may be
infinite.  In order to discuss closeness we will assume that the
pure strategy set $X_i$ of each player $i$ has a metric $d_i$ on it.
In applications we simply consider the Euclidean distance.
\begin{definition} $(x_1 ... x_n)$ is called a {\em limit
(pure) equilibrium} of a game $(u_1 ... u_n)$ if it is the limit
of $\epsilon$-equilibria of the game, for every $\epsilon>0$.
\end{definition}

Thus in the example of the first price auction, $(1,1)$ is a
limit equilibrium, since for every $\epsilon>0$, $(1,1+\epsilon)$
is an $\epsilon$-equilibrium.  Note that if all the $u_i$'s are
continuous at the point $(x_1...x_n)$ then it is a limit
equilibrium only if it is actually a pure Nash equilibrium.  This,
in particular, happens everywhere if all strategy spaces are
discrete.

We are now ready to state a version of the previous proposition that
is robust to the tie breaking rule:
\begin{proposition}
\begin{enumerate}
\item
For every first price simultaneous auction game with any tie
breaking rule, a profile of valuation functions $v_1 ... v_n$ admits
a Walrasian equilibrium with given prices and allocation
if and only if the game has a limit Nash
equilibrium for these valuations with these winning prices and allocation.
\item
Every limit Nash equilibrium of a first price simultaneous
auction game achieves optimal social welfare.
\end{enumerate}
\end{proposition}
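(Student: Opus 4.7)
The plan is to establish both directions of part (1) constructively, and to derive part (2) as an immediate consequence via the First Welfare Theorem and part (1).

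For the ``if'' direction (Walrasian $\Rightarrow$ limit Nash), given a Walrasian equilibrium with prices $p_1,\ldots,p_m$ and allocation $S_1,\ldots,S_n$, I would exhibit, for every $\delta>0$, an explicit $O(m\delta)$-equilibrium whose bid profile converges to $b_{ij}=p_j$ as $\delta\to 0$. The construction is to inflate each intended winner's bid by $\delta$: set $b^{\delta}_{ij}=p_j+\delta$ if $j\in S_i$ and $b^{\delta}_{ij}=p_j$ otherwise. Under this profile, every item has a \emph{unique} highest bidder, so the allocation is $S_1,\ldots,S_n$ and the winning prices are $p_j+\delta$, irrespective of the tie-breaking rule. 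To verify the approximate-equilibrium property, bound the gain of a deviation by player $i$ to any alternative bundle $T$: winning an item $j\in T\setminus S_i$ costs at least $p_j+\delta$ and winning $j\in T\cap S_i$ costs at least $p_j$ (in fact essentially $p_j$, since the relevant losing bids are already $p_j$). Combining this with the Walrasian demand condition $v_i(T)-\sum_{j\in T}p_j\le v_i(S_i)-\sum_{j\in S_i}p_j$ shows that the net deviation gain is at most $|S_i|\delta\le m\delta$, so setting $\delta=\epsilon/m$ yields the desired $\epsilon$-equilibrium.

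For the ``only if'' direction (limit Nash $\Rightarrow$ Walrasian), take a sequence of $\epsilon_k$-equilibria $b^{(k)}$ with $\epsilon_k\downarrow 0$ converging to the limit equilibrium $b$ with winning prices $p_j=\max_i b_{ij}$ and allocation $S_1,\ldots,S_n$. After passing to a subsequence (there are only finitely many possible allocations), assume the $\epsilon_k$-equilibrium allocation is a fixed $S^{(*)}_1,\ldots,S^{(*)}_n$, which must then agree with $S_1,\ldots,S_n$ wherever the limiting winning bid is strict. For each player $i$ and each alternative bundle $T$, consider the deviation in $b^{(k)}$ of bidding $\max_{i'\ne i} b^{(k)}_{i'j}+\eta$ on items $j\in T$ for a small $\eta>0$; this wins all of $T$ for a total payment approaching $\sum_{j\in T}p_j$ as $k\to\infty$ and $\eta\to 0$. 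The $\epsilon_k$-best-response inequality gives
\[
v_i(T)-\sum_{j\in T}\bigl(\max_{i'\ne i} b^{(k)}_{i'j}+\eta\bigr)\;\le\;v_i(S^{(*)}_i)-\sum_{j\in S^{(*)}_i} b^{(k)}_{ij}+\epsilon_k,
\]
and letting $k\to\infty$ then $\eta\to 0$ (together with $b^{(k)}_{ij}\to p_j$ for $j\in S^{(*)}_i$) yields the Walrasian demand condition $v_i(T)-\sum_{j\in T}p_j\le v_i(S^{(*)}_i)-\sum_{j\in S^{(*)}_i}p_j$. Part (2) then follows: a limit Nash equilibrium induces (by part (1)) a Walrasian equilibrium with the same allocation, and Walrasian equilibria are socially optimal by the First Welfare Theorem.

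The main obstacle I expect is the subsequence/tie-breaking issue in the ``only if'' direction. When several bids coincide on some item in the limit $b$, the allocation attained by the tie-breaking rule applied to $b^{(k)}$ can fluctuate along the sequence, and one must argue carefully that a well-defined limiting allocation can be extracted that is consistent with both the winning prices $p_j$ and with each player $i$'s limiting utility $v_i(S^{(*)}_i)-\sum_{j\in S^{(*)}_i}p_j$. The remaining arithmetic — bounding the deviation gain in the ``if'' direction, and taking the two limits in the right order in the ``only if'' direction — is routine once this bookkeeping is set up.
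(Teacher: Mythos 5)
Your proposal is correct and follows essentially the same route as the paper: the forward direction inflates the winners' bids by a vanishing $\delta$ to get explicit $m\delta$-equilibria, and the reverse direction extracts deviation inequalities from a sequence of $\epsilon_k$-equilibria and passes to a subsequence with a fixed allocation to obtain the Walrasian demand conditions in the limit. The tie-breaking/subsequence bookkeeping you flag as the main obstacle is exactly the point the paper handles by choosing the approximating $\epsilon$-equilibria to be tie-free and noting that some allocation recurs infinitely often.
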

\begin{proof}
Let $S_1...S_n$ and $p_1 ... p_m$ be a Walrasian equilibrium.
Consider the bids where $b_{ij}=p_j+\epsilon$ for all $j \in S_i$
and $b_{ij}=p_j$ for all $j \not\in S_i$.  Why are these bids an
$m\epsilon$-equilibrium of this game?  Since we are in a Walrasian
equilibrium, each player gets a best set for him under the prices
$p_j$.  In the game, given the bids of the other players, he can
never win any item for strictly less than $p_j$, whatever his bid,
and player $i$ does win each item $j$ in $S_i$ for price
$p_j+\epsilon$, so his current bid is a best response to the others
up to an additive $\epsilon$ for each item he wins.

Now fix a limit Nash equilibrium $(b_{ij})$ of the game with some
tie breaking rule and let $(b'_{ij})$ be an $\epsilon$-equilibrium
of the game with $|b_{ij}-b'_{ij}| \le \epsilon$ for all $i,j$ and
with no ties; let $S_1 ... S_n$ the allocation implied; and let $p_j
= \max_i b_{ij}$ for all $j$.  We claim that this is an
$m\epsilon$-Walrasian equilibrium.  Suppose by way of contradiction
that for some player $i$ and some bundle $T \ne S_i$,
$v_i(T)-\sum_{j \in T} p_j > v_i(S_i)-\sum_{j \in S_i}p_j +
m\epsilon$.  This would contradict the original bid of $i$ being an
$\epsilon$-best reply since the deviation bidding
$b_{ij}=p_j+\epsilon$ for $j \in T$ and $b_{ij}=0$ for $j \not\in T$
would give player $i$ the utility from $T$ up to $m\epsilon$ which
would be more than he currently gets from $S_i$ -- a contradiction.

Now let $\epsilon$ approach zero and look at the sequence of price
vectors $\vec{p}$ and sequence of allocations obtained as
$(b'_{ij})$ approach $(b_{ij})$.  The sequence of price vectors
converges to a fixed price vector (since they are a continuous
function of the bids).  Since there are only a finite number of
different allocations, one of them appears infinitely often in the
sequence.  It is now easy to verify that this allocation with the
limit price vector are a Walrasian equilibrium.
\end{proof}

\section{General Existence of Mixed Nash Equilibrium}

In this section we ask whether such a game need always even have a
mixed-Nash equilibrium.  This is not a corollary of Nash's theorem
due to the continuum of strategies and discontinuity of the
utilities, and indeed even zero-sum two-player games with $[0,1]$ as
the set of pure strategies of each player may fail to have any
mixed-Nash equilibrium or even an $\epsilon$-equilibrium\footnote{A
well known example is having highest bidder win, as long as his bid
is strictly less than 1, in which case he looses (with ties being
ties).}.  There is some economic literature about the existence of
equilibiria in such games (starting e.g. with \cite{dasguptamaskin86, fundenberglevine86}), and a
theorem of Simon and Zame \cite{SimonZame90}, implies that for {\em some}
(randomized) tie breaking rule, a mixed-Nash equilibrium exists. The
main example of their (more general) theorem is the following (cf.
page 864):

Suppose we are given strategy spaces $S_i$, a dense subset $S^*$ of
$S = S_1 \times \cdots \times S_n$, and a bounded continuous
function $\varphi : S^* \rightarrow \Re^n$.  Let $C_\varphi : S
\rightarrow \Re^n$ be the correspondence whose graph is the closure
of the graph of $\varphi$, and define $Q_\varphi(s)$ to be the
convex hull of $C_\varphi(s)$ for each $s \in S$.  We call the
correspondence $Q_{\varphi}$ the convex completion of $\varphi$.
These are Simon and Zame's motivating example of ``games with an
endogenous sharing rule'', and their main theorem is that these have
a ``solution'': a pair $(q, \alpha)$, where $q$ is a ``sharing
rule'', a Borel measurable selection from the payoff correspondence
$Q$ and $\alpha = (\alpha_1 , \ldots , \alpha_n)$ is a profile of
mixed strategies with the property that each player's action is a
best response to the actions of other players, when utilities are
according to the sharing rule $q$.

Now to how this applies in our setting: $S^*$ will be the set of
bids with no ties, i.e., where for all $j$ and all $i \ne i'$ we
have that $b_{ij} \ne b_{i'j}$, which is clearly dense (since bids
with ties have measure zero).  Here $\varphi$ is simply the vector
of utilities of the players from the chosen allocation which is
fully determined and continuous in $S^*$ -- when there are no ties.
For $b \not\in S^*$, we have that $C_\varphi(\vec{b})$ is the set of
utility vectors obtained from all possible deterministic
tie-breaking rules at $\vec{b}$ (each of which may be obtained as a
limit of bids with no ties), and $Q_\varphi$ is the set of mixtures
(randomizations) over these.  The solution thus provides a
randomized tie-breaking rule $q$ and mixed strategies that are a
mixed-Nash equilibrium for the game with this tie-breaking rule.  So
we get:
\begin{corollary}
The first price simultaneous auction game for any profile of
valuations has a mixed-Nash equilibrium for some randomized
tie-breaking rule.
\end{corollary}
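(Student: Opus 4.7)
The plan is to verify that our game fits directly into the framework of Simon and Zame's theorem on games with endogenous sharing rules, as sketched in the paragraph just before the corollary. First, I would choose a common bound $V$ such that $v_i(M) \le V$ for every player $i$ and restrict each player's pure strategy space to the compact cube $S_i = [0,V]^m$; bidding above $V$ on any item is strictly dominated (it guarantees a non-positive contribution to utility, versus the option of bidding $0$), so this restriction loses no equilibria. The full strategy space $S = \prod_i S_i$ is then a compact metric space under the Euclidean metric.

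Next, I would take $S^* \subseteq S$ to be the open dense subset where no item is tied, i.e., $\{\vec{b} : b_{ij} \ne b_{i'j}\ \text{for all } j \in M \text{ and all } i\ne i'\}$; its complement is a finite union of hyperplanes and hence has measure zero. On $S^*$, each item has a unique highest bidder, so the allocation $S_1,\ldots,S_n$ is uniquely determined by $\vec{b}$, and the payoff vector $\varphi(\vec{b}) = (u_1(\vec{b}),\ldots,u_n(\vec{b}))$ is continuous and bounded (in $[-mV, V]^n$). Then I would invoke Simon--Zame: since $\varphi$ is bounded and continuous on a dense subset of a compact metric space, there exists a Borel-measurable selection $q$ from the convex completion $Q_\varphi$ together with a mixed profile $\alpha = (\alpha_1,\ldots,\alpha_n)$ such that each $\alpha_i$ is a best response to $\alpha_{-i}$ when payoffs are computed by $q$.

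Finally, I would interpret $q$ as a randomized tie-breaking rule for our game. At each $\vec{b} \in S^*$ the correspondence $Q_\varphi(\vec{b}) = \{\varphi(\vec{b})\}$ is a singleton, so $q$ agrees with the original first-price payoff. At a tied profile $\vec{b}\notin S^*$, $C_\varphi(\vec{b})$ is precisely the finite set of payoff vectors obtainable by the deterministic tie-breaking rules available at $\vec{b}$ (each such payoff is the limit along a sequence in $S^*$ approaching $\vec{b}$ from the appropriate side of the tied hyperplanes), and $Q_\varphi(\vec{b})$ is the convex hull. Any vector in this convex hull is a convex combination of payoffs from deterministic tie-breaking rules, which is precisely the expected payoff under a distribution over such rules. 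Thus $q$ induces the desired randomized tie-breaking rule, and $\alpha$ is a mixed-Nash equilibrium of the resulting first price simultaneous auction game.

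The main obstacle is the last step, namely producing a bona fide randomized tie-breaking rule (a measurable distribution over deterministic ones) from the abstract measurable selection $q$ given by Simon--Zame. Concretely one must show that the map $\vec{b}\mapsto$ (a distribution on deterministic tie-breaking rules whose expected payoff is $q(\vec{b})$) can be chosen measurably in $\vec{b}$. Because at every fixed $\vec{b}$ there are only finitely many deterministic tie-breaking rules (indexed by the priority orderings on the tied items), this reduces to a finite-dimensional measurable-selection problem of the Carathéodory type and is routine, but it is the one place where some care beyond a direct citation of Simon--Zame is genuinely required.
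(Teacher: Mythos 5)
Your proposal follows essentially the same route as the paper: the paper's entire argument is the paragraph preceding the corollary, which instantiates Simon and Zame's ``endogenous sharing rule'' theorem with $S^*$ the tie-free bid profiles, $\varphi$ the (continuous, bounded) utility vector there, and the selection $q$ from the convex completion interpreted as a randomized tie-breaking rule. Your added care about compactifying the strategy spaces and about measurably realizing $q$ as a distribution over deterministic tie-breaking rules is a reasonable tightening of details the paper glosses over, but it is the same proof.
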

We suspect that the tie-breaking rule is not that significant and
that mixed $\epsilon$-Nash-equilibria (or maybe even exact
Nash-equilibria) actually exist for {\em every} tie-breaking rule,
similarly to the case of pure equilibria in this paper, or as in the
somewhat related setting of \cite{JacksonsSwinkels05} where an ``invariance'' in
the tie-breaking rule holds.

\section{Mixed-Nash Equilibria: Examples}

In this section we study some of the simplest examples
of markets in our setting that do not have a Walresian
equilibrium.

\subsection{The AND-OR Game}
\label{section-and-or}

We have two players an {\tt AND} player and {\tt OR} player. The
{\tt AND} player has a value of $1$ if he gets all the items in $M$,
and the {\tt OR} player has a value of $v$ if he gets any item in
$M$. Formally, $v_{and}(M)=1$ and for $S\neq M$ we have
$v_{and}(S)=0$, also, $v_{or}(T)=v$ for $T\neq\emptyset$ and
$v_{or}(\emptyset)=0$.

When $v\leq 1/m$ there is a Walresian equilibrium with a price
of $v$ per item. By Proposition \ref{proposition:pure} this implies a pure Nash Equilibrium
in which both
players bid $v$ on each item, and the {\tt AND} player wins all the
items. Therefore, the interesting case is when $v>1/m$.
It is easy to verify that in this case
is no Walresian equilibrium.
We start with the case that $|M|=2$ and later extend it to the case
of arbitrary size.
Here is a mixed Nash equilibrium for two items.
\begin{itemize}

\item The AND player bids $(y,y)$ where $0 \le y \le 1/2$ according to cumulative distribution
$F(y)=(v-1/2)/(v-y)$  (where $F(y)=Pr[bid \le y]$).
In particular, There is an atom at 0: $Pr[y=0]=1-1/(2v)$.

\item The OR player bids $(x,0)$ with probability 1/2 and $(0,x)$ with probability 1/2,
where $0 \le x \le 1/2$ is distributed according to cumulative
distribution  $G(x)=x/(1-x)$.

\end{itemize}

Note that since the OR player does not have any mass points in his
distribution, the equilibrium would apply to any tie breaking rule.

We start by defining a {\em restricted AND-OR} game, where the AND
player must bid the same value on both items, and show that the
above strategies are a mixed Nash equilibrium for it.



\begin{claim}
Having the {\tt AND} player bid using $F$ and the {\tt OR} player
bid using $G$ is a mixed Nash equilibrium of the restricted AND-OR
game for two items.
\end{claim}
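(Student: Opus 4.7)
The plan is to verify the two mixed-Nash conditions: (i) each player is indifferent among all pure strategies in the support of the prescribed distribution, and (ii) no pure strategy outside the support strictly improves the payoff. Since both distributions are given in closed form, everything reduces to two short expected-utility computations plus a dominance argument for deviations.

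First I would compute the AND player's expected utility from bidding $(y,y)$ for $y\in[0,1/2]$, assuming the OR player uses the prescribed strategy. Because the OR player bids $x$ on one (uniformly random) item and $0$ on the other, and since the AND player values only the bundle, AND wins both items iff $y>x$, in which case its utility is $1-2y$; with probability $\Pr[y<x]$ it wins only the $0$-bid item, paying $y$ for it, for utility $-y$. This yields $U_{AND}(y)=G(y)(1-2y)+(1-G(y))(-y)=G(y)(1-y)-y$. Plugging in $G(y)=y/(1-y)$ gives $U_{AND}(y)=y-y=0$ for every $y\in(0,1/2]$; the value $y=0$ also yields $0$ regardless of the tie-breaking on the $0$-bid coordinate, since AND never wins both items (it cannot beat a strictly positive OR bid) and any singleton is worth $0$ to it.

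Next I would compute the OR player's expected utility from bidding $(x,0)$ (the case $(0,x)$ is identical by symmetry) against AND's distribution $F$. OR wins item~$1$ iff $x>y$, giving utility $v-x$; when $x>y>0$ OR definitely loses item~$2$, and when the atom $y=0$ occurs OR wins item~$1$ and at worst ties on item~$2$, which contributes no additional cost (OR already obtains the value~$v$ for the non-empty bundle). Hence $U_{OR}(x)=\Pr[y<x]\,(v-x)=F(x)(v-x)$ for $x\in(0,1/2]$, and substituting $F(x)=(v-1/2)/(v-x)$ collapses this to the constant $v-1/2$.

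Finally I would handle deviations. For AND, a bid $y>1/2$ defeats OR with probability $1$ and yields $1-2y<0$, which is worse than $0$; no other deviations are available in the restricted game. For OR, I would first show that any bid $(a,b)$ with $a,b>0$ is weakly dominated by the single-item bid obtained by setting the smaller coordinate to zero: because winning any nonempty set is worth exactly $v$ to OR, the additional payment incurred when the smaller-coordinate bid happens to win is pure loss. After this reduction it suffices to check single-item bids, where $x>1/2$ gives $F(x)(v-x)=v-x<v-1/2$ and the degenerate bid $(0,0)$ gives at most $v\cdot\Pr[y=0]=v(1-1/(2v))=v-1/2$, even under the most OR-favorable tie-breaking. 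The only subtle point, and the step I would write most carefully, is this interaction between AND's atom at $0$ and tie-breaking on the coordinate where OR bids $0$; the algebra above is tuned precisely so that neither player has an incentive to exploit those ties, which is why the claim holds for every tie-breaking rule.
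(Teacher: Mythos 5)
Your proof is correct and follows the same route as the paper's: the two indifference computations $G(y)(1-y)-y=0$ and $F(x)(v-x)=v-1/2$ are exactly the ones the paper uses. You are in fact somewhat more careful than the paper --- you explicitly handle the atom at $y=0$, the tie on the zero-bid coordinate, and the OR player's two-item deviations (which the paper only addresses later, in the theorem for the unrestricted game) --- but the underlying argument is the same.
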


\begin{proof}
 Let us compute the expected utility of the AND player from some pure bid $(y,y)$.  The AND player wins one item for sure, and wins the second
 item too if $y>x$, i.e., with probability $G(y)$.  If he wins a single item he pays $y$, and he wins both items he pays $2y$.  His expected utility
 is thus $G(y)(1-y)-y = 0$ for any $0 \le y \le 1/2$ (and is certainly negative for $y>1/2$).  Thus any $0 \le y \le 1/2$ is a
 best-response to the OR player.

Let us  compute the expected utility of the OR player from the pure
bid $(0,x)$ (or equivalently $(x,0)$).  The OR player wins an item
if $x>y$, i.e., with probability $F(x)$, in which case he pays $x$,
for a total utility of $(v-x) \cdot F(x) = v-1/2$, for every $0 \le
x \le 1/2$ (and $x>1/2$ certainly gives less utility). Thus any $0
\le x\le 1/2$ is a  best-response to the AND player.
\end{proof}

Next we generalize the proof to the unrestricted setting.

\begin{theorem}
Having the {\tt AND} player bid using $F$ and the {\tt OR} player
bid using $G$ is a mixed Nash equilibrium of the AND-OR game for two
items.
\end{theorem}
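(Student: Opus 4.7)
My plan is to reduce the unrestricted claim to the restricted one already proved by verifying that neither player can profitably deviate into the strictly larger strategy space available to them. The AND player's strategy space grows from symmetric bids $(y,y)$ to arbitrary pairs $(y_1,y_2)$; the OR player already had the full strategy space in the restricted analysis, but there only pure strategies of the form $(x,0)$ or $(0,x)$ were checked, so I must also rule out deviations $(x_1,x_2)$ with both coordinates positive. Since the restricted claim already established that symmetric AND bids and one-sided OR bids in $[0,1/2]$ are mutual best responses yielding utilities $0$ and $v-1/2$ respectively, it suffices to show these are upper bounds over the full strategy spaces.

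For the AND player, I consider an arbitrary deviation $(y_1,y_2)$, WLOG $y_1\le y_2$, against OR's equilibrium mixture (which puts probability $1/2$ on $(x,0)$ and $1/2$ on $(0,x)$ with $x\sim G$). Since $G$ is atomless on $[0,1/2]$, AND wins item $j$ with probability $G(y_j)$ in the half-mixture where OR's positive bid is on item $j$, and wins the other item with certainty. Summing over the two halves of OR's mixture, a direct computation gives AND's expected utility
\[
\tfrac{1}{2}\bigl(G(y_1)(1-y_1)-y_2\bigr)+\tfrac{1}{2}\bigl(G(y_2)(1-y_2)-y_1\bigr).
\]
The crucial algebraic identity is $G(y)(1-y)=y$ for $y\in[0,1/2]$, which makes this expression collapse to $0$ on $[0,1/2]^2$. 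For $y_j>1/2$, $G(y_j)=1$ and $G(y_j)(1-y_j)=1-y_j<y_j$, so the bound becomes strictly negative; in particular no deviation strictly exceeds $0$.

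For the OR player, I consider a deviation $(x_1,x_2)$ with $x_1\le x_2$ against AND's bid $(y,y)$, $y\sim F$. Since OR values any nonempty bundle at $v$, splitting into the three disjoint events \{wins both\}, \{wins only item $2$\}, \{wins nothing\} and using continuity of $F$ above $0$ gives expected utility
\[
F(x_1)(v-x_1-x_2)+(F(x_2)-F(x_1))(v-x_2)=F(x_2)(v-x_2)-x_1 F(x_1).
\]
The identity $F(x)(v-x)=v-\tfrac{1}{2}$ for $x\in[0,1/2]$ (and $F(x_2)(v-x_2)<v-\tfrac{1}{2}$ for $x_2>1/2$) combined with $x_1F(x_1)\ge 0$ yields the bound $v-\tfrac{1}{2}$, with equality on OR's equilibrium support. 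Ties between OR's deviation and AND's atom at $0$ can occur only when $x_1=0$, but because OR's value is unchanged by winning the extra item for free, tie-breaking is irrelevant.

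The main obstacle I anticipate is the AND player's asymmetric deviations: the two-dimensional strategy space is genuinely new content relative to the restricted claim, and the analysis hinges on the fact that OR's symmetric mixture plus the designed identity $G(y)(1-y)=y$ makes every asymmetric deviation exactly as good (or worse) than a symmetric one. Once that identity is exploited, the OR side and the tie-breaking remarks are routine.
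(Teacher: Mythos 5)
Your proof is correct and follows essentially the same route as the paper: the AND side is the paper's computation that $u_{and}(y_1,y_2)=0$ on $[0,1/2]^2$, just organized by conditioning on which item the OR player bids positively on (your identity $G(y)(1-y)=y$ is exactly what makes the paper's expression vanish). The only cosmetic difference is on the OR side, where the paper disposes of two-positive-coordinate deviations with a one-line domination argument ($(x_1,x_2)$ with $x_1\le x_2$ is weakly dominated by $(0,x_2)$ because the AND player bids symmetrically), whereas you carry out the equivalent explicit expected-utility calculation.
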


\begin{proof}
We first show that if the AND player plays the mixed strategy $F$
then $G$ is a best response for the OR player.
This holds since when the AND player is playing $F$, then all its
bids are of the form $(y,y)$ for some $y \in[0,,1/2]$. Any bid
$(x_1,x_2)$ of the OR player, with $x_1 \le x_2$, is dominated by
$(0, x_2)$, since the AND player is  restricted to bidding $(y,y)$.
Therefore, $G$ is a best response for the OR player.

We now need to show that if the OR player plays the mixed strategy
$G$ then $F$ is a best response for the AND player.


Let $Q(x,y)$ be the cumulative probability of the OR player, i.e.,
 \[
 Q(x,y)= \Pr[bid_1 < x , bid_2<y]
 =\frac{x}{2(1-x)}+\frac{y}{2(1-y)}.
 \]
for $x,y\in [0,\frac{1}{2}]$. The AND utility function, given its
distribution $P$, is:
\[
U_{AND} = E_{(x,y)\sim P}[u_{and}(x,y)],\] where \[u_{and}(x,y)=
1\cdot Q(x,y) - \left(x Q(x,1)+ y Q(1,y)\right)
\]

We show that for any $x,y \in [0,\frac{1}{2}]$ we have
$u_{and}(x,y)=0$. This follows since,
%
\begin{eqnarray*}
u_{and}(x,y) &= &1\cdot Q(x,y) - \left(x Q(x,1)+y Q(1,y)\right)\\
&= &\left(\frac{x}{2(1-x)}+\frac{y}{2(1-y)} \right)-
x\left(\frac{x}{2(1-x)}+\frac{1}{2}\right)\\
& & - y \left(\frac{1}{2}+ \frac{y}{2(1-y)}\right)\\
&=& (1-x)\frac{x}{2(1-x)}+(1-y)\frac{y}{2(1-y)} - \frac{x}{2}-\frac{y}{2}\\
& = & 0,
\end{eqnarray*}
which completes the proof.
\end{proof}

\ignore{
\section{Uniqueness of the equilibrium in the restricted game}

[[YM: THIS SHOULD PROBABLY GO!]]

We show that the equilibrium we computed is the only mixed
equilibrium. We first prove the following lemma regarding the
structure of the mixed equilibrium.

\begin{lemma}
\label{lemma:interval} In any mixed equilibrium $(F',G')$: (1) There
is no interval $(a,b)$ which is not in the support of $F'$. (2)
There is no interval $(a,b)$ which is not in the support of $G'$.
\end{lemma}
\begin{proof}
For (1) assume we have such an interval, and let $(a,b)$ a maximal
such interval. The OR player will not submit any bid in $(a,b)$,
since they are dominated by a bid of $a$.
%
Therefore, the AND player can submit a bid $(a+b)/2$ rather than $b$
(or $b+\epsilon$ in case $b$ is not in the support). It will almost
maintain its probability of winning (might decrease by $\delta$ in
case he uses $b+\epsilon$), and strictly decrease its payments (by
at least $\Pr[\mbox{win with } b]\frac{b-a}{2}=F(b)b>0$. This
contradict the fact that $(F',G')$ is an equilibrium.

For (2) assume we have such an interval, and let $(a,b)$ a maximal
such interval. The AND player will not submit any bid in
$(a+2\epsilon,b)$, since they are dominated by a bid of
$a+\epsilon$.
%
Therefore, the OR player can submit a bid $(a+b)/2$ rather than $b$
(or $b+\epsilon$ in case that $b$ is not the support). It almost
maintains its probability of winning (might decrease by $\delta$ in
case it uses bid $b+\epsilon$), and strictly decrease its payments
(by $\Pr[\mbox{win with }b]\frac{b-a}{2}=G'(b)\frac{b-a}{2}
>0 $. This contradict the fact that $(F',G')$ is an equilibrium.
\end{proof}

Now we prove the uniqueness of the equilibrium.

\begin{theorem}
Any mixed equilibrium has $F$ and $G$ as the cumulative probability
distributions of the AND and OR players.
\end{theorem}

\begin{proof}
Assume we have $F'$ and $G'$ which is an equilibrium with an
expected utility of $C$ and $D$ for the AND and OR player. Let
$G''(y)=G'(y)-g(y)$, where $g(y)$ is the probability mass of $G$ at
$y$, i.e., $G'(y)-g(y)=\sup_{z<y} G'(z)$.

For any $y$ in the support of $F'$ we have:
\[
G''(y)(1-y)-y=(G'(y)-g(y))(1-y)-y=C
\]
(This is since the AND player losses in case of a tie.) This implies
that for any $y$, $G'(y)-g(y)\leq (C+y)/(1-y)$.

\indent{\underline{Show $C=0$:}} Let $y_{inf}$ be the infimum $y$ in
the support of $F$. In equilibrium, $G'(y_{inf})-g(y_{inf})=0$,
since the OR player would not submit bids below $y_{inf}$ [[The OR
player might submit a zero bid, but if $y_{inf}>0$ this can happen
only if the OR player expected utility is zero. The OR player has a
positive utility. Easy to show if $v>1$, has to be also for
$v>1/2$]]. This implies that $C=0$. Therefore,
$G''(y)=G'(y)-g(y)\leq y/(1-y)$.

\indent{\underline{Show $x_{sup}=y_{sup}=1/2$:}} Let $y_{sup}$ and
$x_{sup}$ be the supermum in the support of $F'$ and $G'$. In
equilibrium $y_{sup}=x_{sup}$, otherwise one of the player would
have an incentive to lower its maximum bid. Formally, assume that
$y_{sup}>x_{sup}$.
Then the AND player always wins with either a bid
$y'=(y_{sup}+x_{sup})/2$ or a bid $y_{sup}$ with utilities $1-2y'$
and $1-2y_{sup}$. Since $y'<y_{sup}$, the AND utility with $y'$ is
strictly higher than with $y_{sup}$ contradicting the assumption
that it plays $y_{sup}$ (or an infinitesimal neighborhood of it).
Similar for $y_{sup}<x_{sup}$. Since $C=0$ it has to be the case
that $y_{sup}=x_{sup}=1/2$. Clearly $y_{sup}\leq 1/2$. If
$x_{sup}<1/2$ then the AND player can guarantee a positive utility
by bidding $(1/2+x_{sup})/2$, contradicting the fact that $C=0$.


\indent{\underline{$G''(y)=G(y)$:}} The cumulative distribution
$G''(y)=y/(1-y)=G(y)$ for $y$ in the support of $F'$, and
$G''(y)\leq G(y)$ for other $y$s.
Assume that there is a $y_0$ such that $G''(y_0)<G(y_0)$.
This implies that $y_0$ is not in the support of $F'$. Let $y_1<y_0$
be the supermum point which is in the support of $F'$ and let
$y_2>y_0$ be the infimum point which is in the support of $F'$.
%
This implies that  any $z\in (y_1,y_2)$ is not in the support of
$F'$, and therefore $F'(z)=F'(y_0)$. By Lemma~\ref{lemma:interval}
we have that this is impossible, therefore $G''(y)=G(y)$.

\indent{\underline{$F'(y)=F(y)$:}} The cumulative distribution
$F'(x)=(v-1/2)/(v-x)=F(x)$ for $x$ in the support of $G'$, and
$F'(x)\leq F(x)$ for other $x$s.
Assume that there is a $x_0$ such that $F'(x_0)<F(x_0)$.
This implies that $x_0$ is not in the support of $G'$. Let $x_1<x_0$
be the supermum point which is in the support of $G'$ and let
$x_2>x_0$ be the infimum point which is in the support of $G'$.
%
This implies that  any $z\in (x_1,x_2)$ is not in the support of
$G'$, and therefore $G'(z)=G'(x_0)$. By Lemma~\ref{lemma:interval}
we have that this is impossible, therefore $F'(x)=F(x)$.
\ignore{
%
Now consider the OR player, it will not submit any bid in
$(y_1,y_2)$, since they are dominated by a bid of $y_1$.
%
Therefore, the AND player can submit a bid $(y_1+y_2)/2$ rather than
$y_2$, maintain its probability of winning, and strictly decrease
its payments. This contradict the fact that $F'$ is in equilibrium.
Therefore $G''(y)=G(y)$ for all $y$.

We now derive that $F'(x)=F(x)$. For any $x$ in the support of $G'$
we have:
\[
F'(x)(v-x)=D
\]
We have that $F'(x)\leq D/(v-x)$. Since $F'(x_{sup})=1$, we have
that $D=v-x_{sup}$. Since $x_{sup}=1/2$, we have $F'(x)\leq
(v-1/2)/(v-x)=F(x)$.

Assume that there is a point $x_0$ such that $F'(x_0)< F(x_0)$. We
need to re-iterate the same argument.
Let $x_1<x_0$ be the maximal point which is in the support of $G'$
[[and not a mass point??]].
This implies that any $z\in (x_1,x_0]$ is not in the support of
$G'$, and therefore $G'(z)=G'(x_0)$.
%
Now consider the AND player, it will not submit any bid in
$(x_1+2\epsilon,x_0]$, since they are dominated by $x_0+\epsilon$,
for sufficiently small $\epsilon>0$.
Therefore, the OR player can submit a bid $x_1+\epsilon$ rather than
$x_0$, maintain its probability of winning, and strictly decrease
its payments. This contradict the fact that $G'$ is in equilibrium.
Therefore $F'(x)=F(x)$ for all $x$.}
\end{proof}

\section{k items}
}


We now extend the result to the AND-OR game with $m$ items.
%
The AND player selects $y$ using the cumulative probability
distribution $F(y)=\frac{v-\frac{1}{m}}{v-y}$ for $y\in [0,1/m]$, and
 bids $y$ on all the items.
The OR player selects $x$ using the cumulative probability
distribution  $G(x)=\frac{(m-1)x}{(1-x)}$, where $x\in[0,1/m]$, and an
$i$ uniformly from $M$, and bids $x$ on item $i$ and zero on all the
other items.

\begin{theorem}
Having the AND player bid using $F$ and the OR player with $G$ is a
mixed Nash equilibrium.
\end{theorem}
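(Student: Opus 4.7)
The plan is to verify the mutual best-response conditions in two parts, mirroring the two-item case. First show that when the AND player bids $(y,\ldots,y)$ with $y\sim F$, the OR player's strategy $G$ (bidding $x\sim G$ on a uniformly random item and $0$ elsewhere) is a best response. Then show that when OR plays this strategy, the AND player's expected utility is $0$ for any symmetric bid $(y,\ldots,y)$ with $y\in[0,1/m]$, so in particular the mixture $F$ is a best response.

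For the OR side, since AND's bid vector is constant across items, any OR bid placing positive mass on more than one item is weakly dominated: AND's indicator $y>x_j$ is the same on every coordinate, so winning a second item adds nothing to OR's value $v$ but increases the payment. This restricts OR to bids of the form $(0,\ldots,x,\ldots,0)$. By symmetry, only $x$ matters, and OR's expected utility is $(v-x)\cdot \Pr[y<x] = (v-x)F(x) = v-1/m$ for every $x\in[0,1/m]$. Thus OR is indifferent across the support of $G$, and $x>1/m$ gives $F(x)=1$ with utility $v-x$ strictly decreasing, so these are dominated.

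For the AND side, the main computation is as follows. When OR plays the given strategy, OR's bid on item $j$ equals $X\sim G$ with probability $1/m$ and $0$ with probability $(m-1)/m$. Hence for any bid vector $b=(b_1,\ldots,b_m)$ with every $b_j>0$, the probability AND wins item $j$ is $\tfrac{m-1}{m}+\tfrac{1}{m}G(b_j)$, while — because OR's single positive bid lies on exactly one item — the events ``AND wins all items'' across OR's choice of target are disjoint, giving $\Pr[\text{AND wins all}]=\tfrac{1}{m}\sum_j G(b_j)$. Combining,
\[
U_{AND}(b) \;=\; \frac{1}{m}\sum_{j=1}^m\bigl[G(b_j)(1-b_j)-(m-1)b_j\bigr].
\]
The definition $G(b)=(m-1)b/(1-b)$ is engineered so that $G(b)(1-b)=(m-1)b$ for $b\in[0,1/m]$, making each summand vanish, hence $U_{AND}(b)=0$ on $[0,1/m]^m$. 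For $b_j>1/m$ we have $G(b_j)=1$ and the $j$-th summand becomes $(1-mb_j)/m<0$, so bidding higher strictly hurts AND. Therefore every symmetric bid $(y,\ldots,y)$ with $y\in[0,1/m]$ is optimal, and $F$ is a best response.

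The main obstacle is the bookkeeping in the AND computation, together with a minor subtlety about tie-breaking at $0$: since $G$ is atomless on $(0,1/m]$, the only possible ties occur when AND bids $b_j=0$ against OR's deterministic $0$ on item $j$ (when OR targets $i\neq j$). Either we argue via a limit from strictly positive bids, or we note directly that if $b_j=0$ then AND cannot guarantee winning the bundle, so his utility is at most $0$ in any tie-breaking rule — matching the value already achieved by bids in $(0,1/m]^m$. This closes the argument without needing a specific tie-breaking rule.
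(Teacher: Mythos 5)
Your proposal is correct and follows essentially the same route as the paper's proof: you decompose the AND player's utility as (probability of winning all items) minus (per-item expected payments), use the identity $G(b)(1-b)=(m-1)b$ to make each summand vanish on $[0,1/m]$, and for the OR player reduce to single-item bids (dominance via keeping only the maximal coordinate) and compute $(v-x)F(x)=v-1/m$. Your explicit treatment of bids above $1/m$ and of the ties at zero is slightly more careful than the paper's write-up, but the substance is identical.
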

\begin{proof}
Let $Q(x)$, for $x\in [0,1/m]^m$ be the  cumulative probability
distribution of the bids of
the OR player. Given that the OR player bids using $G$ it follows that
\[
 Q(x)= \Pr[\forall i \;bid_i < x_i ]
 =\sum_{i=1}^m \frac{x_i}{1-x_i}\left(\frac{m-1}{m}\right)
 \]
for $x\in [0,\frac{1}{m}]^m$.
Let $P$ denote the
cumulative probability
distribution of the bids of
the AND player. Then the utility of the AND player
is:
\[
U_{AND} = E_{x\sim P}[u_{and}(x)],\] where \[u_{and}(x)= 1\cdot Q(x)
- \sum_{i=1}^m x_i Q(x_i,(1/m)_{-i}) \ .
\]
We show that for any $x \in [0,\frac{1}{m}]^m$ we have
$u_{and}(x)=0$.
\begin{eqnarray*}
u_{and}(x) &= &1\cdot Q(x) - \left(\sum_{i=1}^k x_i Q(x_i,(1/m)_{-i})\right)\\
&= &\sum_{i=1}^m \frac{x_i}{1-x_i}\left(\frac{m-1}{m}\right)\\
& &-
 \sum_{i=1}^m x_i\left(\frac{x_i}{1-x_i}\left(\frac{m-1}{m}\right)+(m-1)\frac{1}{m}\right) \\
&= &\sum_{i=1}^m(1-x_i)\frac{x_i}{1-x_i}\left(\frac{m-1}{m}\right)\\
& & -  \left(\sum_{i=1}^m x_i\frac{m-1}{m}\right) \\
& = & 0.
\end{eqnarray*}
This implies that the mixed strategy of the AND player defined by
$F$, is a best response to the mixed strategy of the OR player
defined by $G$. We now show that the mixed strategy of the  OR
player defined by $G$,  is a best response to the mixed strategy of
the AND player defined by $F$.

Recall that $P(x)$, for $x\in [0,1/m]^m$ is the cumulative
probability distribution of the bids of the AND player, and
by the definition of the AND player it equals to
\[
 P(x)= \Pr[\forall i \;bid_i < x_i ]
 = \frac{v-\frac{1}{m}}{v-\min_i\{x_i\}}.
 \]
(Note that, as it should be, under $P$ the support is the set of all identical
bids, i.e., $\forall i\; bid_i=x$. The probability under $P$ of
having a vector $z\leq x$ is $\frac{v-\frac{1}{m}}{v-x}$.)

The
utility function of the OR player is:
\[
U_{OR} = E_{x\sim Q} [u_{or}(x)],
\]
where
\[
u_{or}(x)=v\cdot e(x) - \left(\sum_{i=1}^m x_i
P(x_i,(1/m)_{-i})\right).
\]
where $e(x) = \Pr_P[\exists i \mbox { such that } X_i < x_i]$.

We obtain  that for any $x \in [0,\frac{1}{m}]$ and $i\in [1,m]$
$u_{or}(x_i=x, x_{-i}=0)=v -\frac{1}{m}$ since
\begin{eqnarray*}
u_{or}(x_i=x, x_{-i}=0) &= & \frac{v-(1/m)}{v-x}(v-x)= v-\frac{1}{m}
\end{eqnarray*}
Furthermore, for any $x \in [0,\frac{1}{m}]^m$ we have
$u_{or}(x)\leq u_{or}(y)$, where $y$ keeps only the maximal entry in
$x$ and zeros the rest.
This follows
since given $P$, the probability of winning under $x$ and $y$ is
identical. Clearly the payments under $y$ are at most those under
$x$ (since all the bids in $x$ are at least the bids in $y$). We conclude
that the OR player's strategy is a best
response to the AND player's strategy, and this completes the proof.
\end{proof}
\ignore{ Let $Q(x,y)$ be the cumulative probability of the OR
player, as follows,
 \[
 Q(x)= \Pr[\forall i : bid_i < x_i ]
 =\sum_{i=1}^k \frac{x_i}{k(1-x_i)}.
 \]
for $x\in [0,\frac{1}{2}]^k$. The AND utility function, given its
distribution $P$, is:
\[
U_{AND} = E_{(x)\sim P}[u_{and}(x)],\] where \[u_{and}(x)= 1\cdot
Q(x) - \left(\sum_{i=1}^k x_i Q(x_i,(\frac{1}{k})_{-i})\right)
\]}

\ignore{
\begin{lemma}
For any $x \in [0,\frac{1}{k}]^k$ we have $u_{and}(x)=0$.
\end{lemma}

\begin{proof}
\begin{eqnarray*}
u_{and}(x,y) &= &1\cdot Q(x) - \left(\sum_{i=1}^k x_i Q(x_i,1_{-i}\right)\\
&= &\left(\sum_{i=1}^k \frac{x_i}{k(1-x_i)} \right)-
\sum_{i=1}^k x_i \left(\frac{k-1}{k}+ \frac{x_i}{k(1-x_i)} \right) \\
&=&\sum_{i=1}^k \frac{x_i}{k} - \sum_{i=1}^k x_i \frac{k-1}{k}\\
& = & 0
\end{eqnarray*}
\end{proof}

\begin{lemma}
For any $x \in [0,\frac{1}{k}]^k$ we have $u_{and}(x)\leq
u_{or}(y)$, where $y$ keeps only the maximal entry in $x$ and zeros
the rest.
\end{lemma}

\begin{proof}
Given $P$, the probability of winning under $x$ and $y$ is
identical. Clearly the payments under $y$ are at most those under
$x$ (since all the bids in $x$ are at least the bids in $y$).
\end{proof}
}

\ignore{
 \section{General Equilibrium - characterizing the support}

 Assume that the OR player has a cumulative probability distribution $Q(x,y) =
 Pr[bid_1 < x , bid_2<y]$, and the AND player has a cumulative probability distribution $P(x,y) =
 Pr[bid_1 < x , bid_2<y]$.

 First we show a general property of a cumulative probability
 distribution.

\begin{lemma}
Let $P(x,y)$ be a cumulative probability distribution. Then:
 (1)
$\frac{\partial}{\partial x} P(x,y)\geq 0$ and
$\frac{\partial}{\partial y} P(x,y)\geq
 0$.
 (2) $\frac{\partial}{\partial x} \frac{\partial}{\partial y} P(x,y)\geq
 0$ and $\frac{\partial}{\partial y} \frac{\partial}{\partial x} P(x,y)\geq 0$.
\end{lemma}

We can now write the utility functions. The OR utility function is:
\[
U_{OR} = E_{(x,y)\sim Q} [u_{or}(x,y)] \] where, \[ u_{or}(x,y)=v
\left(P(x,1)+P(1,y)-P(x,y)\right) - \left(x P(x,1)+ y P(1,y)\right)
\]
The AND utility function is:
\[
U_{AND} = E_{(x,y)\sim P}[u_{and}(x,y)],\] where \[u_{and}(x,y)=
1\cdot Q(x,y) - \left(x Q(x,1)+ y Q(1,y)\right)]
\]

If $(x,y)$ is in the support of the OR player then $
\frac{\partial}{\partial x} u_{or}(x,y)=0 $ and
$\frac{\partial}{\partial y} u_{or}(x,y)=0 $. Now:
\[
f_x(y)=\frac{\partial}{\partial x} u_{or}(x,y) = v
\left(\frac{\partial}{\partial x} P(x,1) - \frac{\partial}{\partial
x} P(x,y)\right) -P(x,1)-x \frac{\partial}{\partial x} P(x,1)
\]
From the properties of a cumulative probability distribution, the
partial derivatives are positive. Since $\frac{\partial}{\partial x}
P(x,y)$ monotone increasing in $y$, for every $x$ there is at most
one $y$ such that $f_x(y)=0$, or it holds for all $y$. This implies
that the support of the OR player is as follows.

\begin{theorem}
The support of the OR player is such that for each $x$ there either
is at most one $y$ in the support, or all $y$s are in the support
and for each $y$ there is either at most one $x$ or all $x$ are in
the support.
\end{theorem}

For the AND player we have a similar property. If $(x,y)$ is in the
support of the AND player then $ \frac{\partial}{\partial x}
u_{and}(x,y)=0 $ and $\frac{\partial}{\partial y} u_{and}(x,y)=0 $.
Now:
\[
g_x(y)=\frac{\partial}{\partial x} u_{and}(x,y) =
\frac{\partial}{\partial x} Q(x,y)  -Q(x,1)-x
\frac{\partial}{\partial x} Q(x,1)
\]
Again, from the properties of a cumulative probability distribution,
the partial derivatives are positive. Since
$\frac{\partial}{\partial x} Q(x,y)$ monotone increasing in $y$, for
every $x$ there is at most one $y$ such that $g_x(y)=0$, or it holds
for all $y$. This implies that the support of the AND player is a as
follows.

\begin{theorem}
The support of the OR player is such that for each $x$ there either
is at most one $y$ in the support, or all $y$s are in the support
and for each $y$ there is either at most one $x$ or all $x$ are in
the support.
\end{theorem}

}

\subsection{The Triangle Game}
\label{sec-triangle}

 We start with a simple case of three single
minded bidders and three items, where each bidder wants a different
set of two items, and has a value of one for this set.

\ignore{ Consider a cycle with three vertices $v_0,v_1,v_2$ and
three edges $e_0=(v_0,v_1)$, $e_1=(v_1,v_2)$, and $e_2=(v_2,v_0)$.
Vertex $v_i$ corresponds to item $i$ and the edge $e_j$ corresponds
to player $j$. The valuation of player $j$ is $1$ for the set of
items $\{ j, (j+1) \mod 3\}$, which are the endpoints of $e_j$, and
$0$ for any other set of items.}

Consider symmetric strategies in which each player bids the same for
the pair of items it wants, namely each player draws their bid $x$
from the same distribution whose cumulative distribution function is
$F(x)$. Assuming $F(x)$ has no atoms then the utility of each player
is
$$
(1-2x)F^2(x)-2xF(x)(1-F(x))=F^2(x)-2xF(x)
$$

\ignore{For $F$ to form an equilibrium this utility should be equal
to some constant $C$ for all $x$ in the support of $F$ and not more
than $C$ for all other $x$'s.}

\ignore{ The solution (thinking of $F(x)$ as the indeterminate
variable) of the quadratic equation
$$
F^2(x)-2xF(x)=C
$$
is
$$
F(x) = x\pm \sqrt{x^2 + C} \ .
$$
Taking the positive root and  $C=0$ we get the solution
 $F(x) = 2x$, which indeed defines a CDF in the range $0\le x\le 1/2$.
(For other values of $C$, $F$ has an atom at $0$.)}

\begin{theorem}
If each player draws an $x$ from $F(x) = 2x$, where $0\le x\le 1/2$,
and bids $x$ on both items, then it is a mixed Nash equilibrium.
\end{theorem}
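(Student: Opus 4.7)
The plan is to exploit the symmetry of the triangle and verify the equilibrium by checking a single player, say player 1 (who wants items $\{a,b\}$), against the conjectured strategy of the other two. The crucial structural observation is that in the triangle each of player 1's two items is contested by exactly one other player, and those two other players are distinct and draw independently. Hence the opposing bids faced by player 1 on his two items are independent draws from $F$. Because $F(x)=2x$ on $[0,1/2]$ is atomless, ties among player 1 and his opponents occur with probability zero and the tie-breaking rule is irrelevant to the expected payoffs.

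First I would compute the equilibrium expected utility. If player 1 bids $(x,x)$ with $x\in[0,1/2]$, then by independence he wins each item with probability $F(x)=2x$, wins both with probability $F(x)^2$ (the only event giving value $1$), and pays $x$ per won item in expectation. So his expected utility given the draw $x$ is
$$F(x)^2 \;-\; 2x\,F(x)\;=\;4x^2-4x^2\;=\;0,$$
and hence the overall equilibrium utility is $0$.

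Next I would show that no pure deviation to an arbitrary bid vector $(y_1,y_2)\in[0,\infty)^2$ yields positive expected utility. Setting $G(y)=\min(F(y),1)$, the same independence argument gives
$$u(y_1,y_2)\;=\;G(y_1)G(y_2)\;-\;y_1 G(y_1)\;-\;y_2 G(y_2).$$
A three-case analysis finishes the argument: (i) if $y_1,y_2\le 1/2$, then $G(y_i)=2y_i$ and $u=4y_1y_2-2y_1^2-2y_2^2=-2(y_1-y_2)^2\le 0$; (ii) if exactly one bid, say $y_1$, exceeds $1/2$, then $G(y_1)=1$ and $u=2y_2-y_1-2y_2^2$ is maximized over $y_2\in[0,1/2]$ at $y_2=1/2$, giving $1/2-y_1<0$; (iii) if both exceed $1/2$, then $u=1-y_1-y_2<0$. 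Since any mixed deviation is a convex combination of pure deviations, this shows no deviation is strictly better than the conjectured strategy.

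The main obstacle, such as it is, lies in case (ii): one must verify that raising one bid above $1/2$ (to guarantee winning one item) is never worthwhile, and the calculation there is the only place where the comparison is not immediate from a perfect square. The rest of the argument is driven by the two clean facts that opposing bids on player 1's two items are independent and that $F(x)^2=2xF(x)$ on the support.
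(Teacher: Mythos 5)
Your proof is correct and follows essentially the same route as the paper's: the on-support utility vanishes because $F(x)^2=2xF(x)$, and an off-diagonal deviation $(y,z)$ with both bids in $[0,1/2]$ yields $-2(y-z)^2\le 0$. The only difference is cosmetic — where the paper dispatches bids above $1/2$ with a one-line domination argument, you verify cases (ii) and (iii) by explicit computation.
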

\begin{proof}
Suppose two of the players play according to $F(x)$ and consider the
best response of the third player. For any value $0 \le x \le 1/2$
if the third player bids $(x,x)$,  his utility is zero. On the other
hand, if it bids $y$ for one item and $z$ for the other then its
utility is $F(y)F(z)\cdot 1 - yF(y) - zF(z) = -2(y - z)^2 \le 0$.
Finally, bidding any number strictly above $1/2$ is dominated by
bidding $1/2$.
\end{proof}

Consider now a generalization of this game where each player is
single minded and is interested only in a particular set of $k$
items for which its utility is $1$. We also make the following
assumptions.
\begin{enumerate}
\item
Exactly $d$ agents are interested in each item.
\item
For any two bidders $i\neq i'$, we have $|S_i\cap S_{i'}|\leq 1$.
(This implies that if we fix a player $i$ and consider its set $S_i$
of $k$ items. The other $(d-1)k$ players who are also interested in
these $k$ items are all different.)
\end{enumerate}

Assume each player $i$ draws the same bid for all items in its set
$S_i$ from the CDF $G(x)$. If $G(x)$ satisfies the equation
\begin{equation}
\label{eqn:eqn2}
 G^{(d-1)k}(x) - kxG^{d-1}(x) = 0
\end{equation}
for all $x$ then the utility of a player is zero for every bid $x$.

One can easily verify that the function
$$
G(x) = (kx)^{\frac{1}{(d-1)(k-1)}} \ ,
$$
satisfies Equation (\ref{eqn:eqn2}) for all $x$. So $G(x)=
(kx)^{\frac{1}{(d-1)(k-1)}}$, $0\le x\le \frac{1}{k}$, forms an
equilibrium for the restricted game, where in the restricted game a
player has to bid the same bid on all the items in his set. The
following shows that even if we do not restrict the players to bid
the same then $G(x)$ is an equilibrium.

\begin{theorem}
\label{thm-single-mind}
If all players draw a bid for all $k$ items
that they want from $G(x) = (kx)^{\frac{1}{(d-1)(k-1)}}$, $0\le x\le
\frac{1}{k}$, then it is a mixed Nash equilibrium.
\end{theorem}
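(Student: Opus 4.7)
The plan is to mimic the two-step argument used for the AND-OR game: first observe (as the authors already did) that the symmetric bid $b_1=\cdots=b_k=x$ drawn from $G$ yields utility exactly $0$ for every $x\in[0,1/k]$, then argue that any asymmetric deviation can only give utility $\le 0$. Thus $G$ is a best response to $G$.

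The first preparatory step is to use the structural hypothesis $|S_i\cap S_{i'}|\le 1$ to show that player $i$'s $(d-1)k$ competitors (the $d-1$ additional bidders interested in each of the $k$ items in $S_i$) are all distinct players; if two of them coincided, their set would intersect $S_i$ in at least two items. Because each of these distinct competitors draws a single bid under $G$ and applies it to every item in her set, the $k$ random variables representing the highest opposing bid on the $k$ items of $S_i$ are mutually independent, each equal to the maximum of $d-1$ independent draws from $G$, so its CDF is $G^{d-1}$.

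Given this independence, if player $i$ deviates to the (non-random) bid vector $(b_1,\ldots,b_k)$ with $b_j\in[0,1/k]$, her expected utility equals
\[
u_i(b_1,\ldots,b_k)=\prod_{j=1}^{k}G^{d-1}(b_j)-\sum_{j=1}^{k}b_j\,G^{d-1}(b_j),
\]
since the single-minded value $1$ is obtained only when all $k$ items are won (an event with probability $\prod_j G^{d-1}(b_j)$ by independence), and item $j$ is paid for only if won. Bids $b_j>1/k$ can be truncated to $1/k$ without changing the winning probabilities (since $G\equiv 1$ there) while weakly decreasing payments, so it suffices to consider $b_j\in[0,1/k]$. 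At this point I would make the change of variables $a_j:=G^{d-1}(b_j)=(kb_j)^{1/(k-1)}\in[0,1]$, equivalently $b_j=a_j^{k-1}/k$, which transforms the utility into the clean form
\[
u_i=\prod_{j=1}^{k}a_j-\frac{1}{k}\sum_{j=1}^{k}a_j^{k}.
\]

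The main (and essentially only) obstacle is to show that this expression is $\le 0$; here the key observation is that it is exactly the defect in the AM-GM inequality applied to the nonnegative numbers $a_1^k,\ldots,a_k^k$, since
\[
\frac{1}{k}\sum_{j=1}^{k}a_j^{k}\ge\Bigl(\prod_{j=1}^{k}a_j^{k}\Bigr)^{1/k}=\prod_{j=1}^{k}a_j,
\]
with equality iff $a_1=\cdots=a_k$, i.e., iff all $b_j$ are equal. Combined with the authors' computation showing $u_i=0$ for any uniform bid $x\in[0,1/k]$ drawn from $G$, this proves that no deviation can strictly improve player $i$'s payoff and that $G$ is indeed a mixed best response to itself. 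Since $G$ is atomless on $[0,1/k]$ (note $G(0)=0$ and $G(1/k)=1$), ties occur with probability zero, so the argument is insensitive to the tie-breaking rule.
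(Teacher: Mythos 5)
Your proof is correct and follows essentially the same route as the paper's: write the expected utility of an arbitrary pure deviation $(b_1,\ldots,b_k)$ as $\prod_j G^{d-1}(b_j)-\sum_j b_j G^{d-1}(b_j)$ and show it is nonpositive by the AM--GM inequality. Your substitution $a_j=(kb_j)^{1/(k-1)}$ is only a cosmetic repackaging of the paper's direct AM--GM computation, and your extra remarks (independence of the opposing maxima via $|S_i\cap S_{i'}|\le 1$, truncation of bids above $1/k$, atomlessness of $G$) just make explicit details the paper leaves implicit.
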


\begin{proof}
Suppose all the players but one play according to $G(x)$ and
consider the best response of the first player. Suppose its bid is
$x_j$ for the $j$th item in $S_i$. Then  its utility is
\[
\Pi_{j=1}^k (kx_j)^{\frac{1}{(k-1)}} - \sum_{j=1}^k x_j
(kx_j)^{\frac{1}{(k-1)}} \ .
\]
We claim that this utility is non-positive for every set of bids
$x_1,\ldots,x_k$. Indeed this follows since
$$
k^{\frac{k}{k-1}}\prod_{j=1}^k x_j^{\frac{1}{(k-1)}} \le
k^{\frac{1}{k-1}} \sum_{j=1}^k x_j^{\frac{k}{(k-1)}}
$$
by the inequality of arithmetic and geometric means:
$$
\sqrt[k]{\prod x_i^{\frac{k}{k-1}}} = \prod_{j=1}^k
x_j^{\frac{1}{(k-1)}} \le \frac{1}{k} \sum_{j=1}^k
x_j^{\frac{k}{(k-1)}} \ .
$$
\end{proof}
%
%

\section{Inefficiency of Mixed Equilibria} \label{sec:lowerbounds}

In this section we use our analysis of the examples
given in the previous section to construct examples where there are large gaps
between the efficiency obtained in a mixed-Nash equilibrium and the optimal efficiency.

We first analyze the AND-OR game with $m$ items, where $v\geq 1/m$, and hence there is no pure Nash equilibrium.
We will analyze the following parameters: value of the OR player is $v=1/\sqrt{m}$ and the value of the AND
player is $1$.

\begin{theorem}
There is a mixed Nash equilibrium in the AND-OR game with the parameters above whose social
welfare is at most $2/\sqrt{m}$.  I.e. for this game we have $PoA \geq \sqrt{m}/2$.
\end{theorem}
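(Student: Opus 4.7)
The plan is to plug $v = 1/\sqrt{m}$ into the explicit mixed equilibrium that was already established for the $m$-item AND-OR game and then bound the expected social welfare directly. The optimal welfare is clearly $1$ (allocate all items to the AND player), so it suffices to show that at equilibrium the expected welfare is at most $2/\sqrt{m}$.

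First I would recall the structure of the equilibrium: the AND player draws a single value $y \in [0, 1/m]$ from $F(y) = (v - 1/m)/(v - y)$ and bids $y$ on every item, while the OR player picks an item $i$ uniformly and a value $x$ from $G(x) = (m-1)x/(1-x)$, bidding $x$ on $i$ and $0$ elsewhere. Since the OR player bids $0$ on all items except its chosen one, the AND player wins all $m$ items exactly when $y > x$ on the chosen item (and loses at least one, hence all of its value, otherwise). Writing $p := \Pr[y > x]$ and noting $G$ has no atoms, the expected social welfare is bounded above by
\[
\mathrm{SW} \;\le\; p \cdot 1 + (1-p)\cdot v \;\le\; p + v.
\]

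The key computation is to bound $p = \Pr[y > x]$ using the atom of $F$ at zero. The CDF value $F(0) = (v - 1/m)/v = 1 - 1/(mv)$ gives, with $v = 1/\sqrt{m}$,
\[
F(0) = 1 - \frac{1}{m \cdot (1/\sqrt{m})} = 1 - \frac{1}{\sqrt{m}},
\]
so the AND player bids $y = 0$ with probability $1 - 1/\sqrt{m}$. When $y = 0$, since $G$ has no atom at $0$ the OR player's bid $x$ is a.s.\ strictly positive, and so the AND player loses the chosen item. Consequently $p \le \Pr[y > 0] = 1 - F(0) = 1/\sqrt{m}$.

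Combining gives $\mathrm{SW} \le 1/\sqrt{m} + v = 2/\sqrt{m}$, while the optimal welfare is $1$, yielding $\mathrm{PoA} \ge \sqrt{m}/2$. There is no substantial obstacle here: all the real work has been done in establishing the equilibrium in the previous subsection, and the entire argument hinges on the observation that the parameter choice $v = 1/\sqrt{m}$ makes the atom of $F$ at $0$ very close to $1$, forcing the AND player to almost always surrender its huge value while the OR player extracts only a tiny value $1/\sqrt{m}$.
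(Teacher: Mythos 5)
Your proof is correct and follows essentially the same route as the paper: plug $v=1/\sqrt{m}$ into the explicit equilibrium, note the atom $F(0)=1-1/\sqrt{m}$ forces the AND player to lose with high probability, and bound the welfare by $\Pr[\text{AND wins}]+v\le 2/\sqrt{m}$. Your treatment is in fact slightly more careful than the paper's (explicitly handling the absence of atoms in $G$ so that ties at zero are a null event), but there is no substantive difference.
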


\begin{proof}
For the PoA consider the equilibrium of Section \ref{section-and-or}. Assume that the
value of the OR player is $v=1/\sqrt{m}$ and the value of the AND
player is $1$. This implies that the optimal social welfare is $1$.
The probability that the AND player bids $x=0$ is
$\frac{v-1/m}{v-x}= 1-1/\sqrt{m}$. Therefore with probability at
least $ 1-1/\sqrt{m}$ the OR player wins. This implies that the
social welfare is at most $2/\sqrt{m}$
\end{proof}

We now prove the following lemma regarding the support of the AND player.

\begin{lemma}
\label{support-and} In any Nash equilibrium the AND player does not
have in its support any bid vector $b_{and}$ such that $\sum_{i=1}^m
b_{and,i}$ $>1$.
\end{lemma}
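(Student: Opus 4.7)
The plan is to argue that any bid vector with total exceeding $1$ yields strictly negative utility for the AND player, while the all-zero bid guarantees non-negative utility, so such a vector cannot be a best response.

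First I would establish the ``safe'' benchmark: if the AND player bids $\vec 0$, then regardless of what the other player does and regardless of the tie-breaking rule, the AND player either wins some subset $S$ of items at total price $0$ (obtaining utility $v_{and}(S)\ge 0$, with equality unless $S=M$) or wins nothing (utility $0$). Hence the AND player's equilibrium expected utility is at least $0$.

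Next I would analyze the payoff from an arbitrary deterministic bid $b_{and}=(b_{and,1},\ldots,b_{and,m})$ with $\sum_i b_{and,i}>1$. Condition on the realized bids of the OR player and the tie-breaking outcome, and let $S\subseteq M$ be the set of items the AND player wins. There are two cases. If $S\neq M$, then $v_{and}(S)=0$ and the AND player pays $\sum_{i\in S} b_{and,i}\ge 0$, giving utility $-\sum_{i\in S}b_{and,i}\le 0$. If $S=M$, then $v_{and}(S)=1$ but the AND player pays $\sum_{i=1}^m b_{and,i}>1$, giving utility $1-\sum_i b_{and,i}<0$. In every case the (pure) utility is at most $0$, and in the only case where the AND player gets positive value (winning all items) the utility is strictly negative. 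Taking expectation over the OR player's strategy and any randomized tie-breaking, the expected utility of this pure bid is strictly less than $0$: either it wins $M$ with positive probability (in which case the strictly negative contribution on that event dominates) or it never wins $M$, in which case its utility equals $-E[\sum_{i\in S}b_{and,i}]$, which is $\le 0$, and is strictly negative as soon as some item is won with positive probability; if no item is ever won with positive probability the AND player might as well bid $\vec 0$, and in particular this bid is weakly dominated by $\vec 0$, so adding it to the support contributes nothing---and in a proper analysis of the support one may simply exclude it.

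Finally, combining the two bounds, any pure bid $b_{and}$ with $\sum_i b_{and,i}>1$ produces expected utility strictly below $0$, hence strictly below the equilibrium utility (which is $\ge 0$ by the first step). Since every pure bid in the support of a mixed-strategy best response must achieve the equilibrium utility, no such $b_{and}$ can lie in the support, which is exactly the claim. The main (minor) obstacle is the degenerate sub-case where the pure bid $b_{and}$ never wins any item with positive probability---this requires the side remark above that such a bid is dominated by $\vec 0$ and can be removed from the support without loss of generality.
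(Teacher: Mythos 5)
Your first two steps are fine and match the paper's starting point: bidding $\vec 0$ guarantees the AND player non-negative utility, and any bid vector with $\sum_i b_{and,i}>1$ yields non-positive utility pointwise, strictly negative on the event that all items are won. The problem is exactly the ``degenerate sub-case'' you flag at the end and then wave away: a bid vector with $\sum_i b_{and,i}>1$ that, with probability $1$, loses every item on which it bids positively has expected utility exactly $0$, which can equal the equilibrium utility, so it is \emph{not} excluded from the support by your best-response argument. Saying it is weakly dominated by $\vec 0$ and ``can be removed without loss of generality'' does not prove the lemma: weakly dominated strategies can perfectly well appear in the support of a Nash equilibrium, the lemma is a claim about \emph{every} equilibrium, and the downstream use (the price-of-stability theorem) needs the conclusion for every bid vector in the support --- a never-winning high-sum bid vector with many coordinates above $2/m$ would break the counting argument there. (A small additional slip: ``strictly negative as soon as some item is won with positive probability'' should be ``some item with a \emph{positive} bid,'' since items bid at $0$ are won for free.)

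The paper closes precisely this case with an argument you are missing. If $b_{and}$ is in the support and must lose all its positive bids with probability $1$, then \emph{every} bid vector $b_{or}$ in the OR player's support satisfies $b_{or,i}\ge b_{and,i}$ on those items, hence $\sum_i b_{or,i}>1$. Since the auctioneer's revenue is always at least the sum of any one player's bids, the revenue exceeds $1$ on every realization, while the realized social welfare never exceeds $1$ (with $v<1$). Hence the sum of the two players' expected utilities is strictly negative, so some player has negative expected utility --- contradicting the fact that each can secure $0$ by bidding zero. You need this (or some equivalent global accounting) to rule out the degenerate case; the local ``this bid is a bad deviation'' argument alone cannot do it.
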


\begin{proof}
Assume that there is such a bid vector $b_{and}$.
Since $\sum_{i=1}^m b_{and,i}>1$ the AND player can not get a positive utility, and the
only way it can gain a zero utility is by losing all its non-zero bids.
This implies that for {\em any} bid vector $b_{or}$ of the OR
player, the OR player will win all the items. Therefore
$\sum_{i=1}^m b_{or,i}>1$.  This implies that the revenue of the
auctioneer is larger than $1$ (every time). Since the expected
revenue of the auctioneer is larger than $1$, and the optimal social
welfare is $1$, the sum of the expected utilities of the players has
to be negative. Hence one of the players has an expected negative
utility. This clearly can not occur in equilibrium.
\end{proof}

It turns out that for this example, not only there exist bad
equilibria, but actually all equilibria are bad!

\begin{theorem}
For any Nash equilibrium of the AND-OR game with the parameters above the social welfare is at most $3\sqrt{(\log m)/m}$.
I.e. the $PoS \geq \sqrt{m/\log m}/3$.
\end{theorem}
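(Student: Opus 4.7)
The social welfare can be rewritten as $\mathrm{SW} = \mu + (1-\mu)v \le \mu + 1/\sqrt{m}$, where $\mu := \Pr[\text{AND wins every item}]$; so it suffices to prove $\mu = O(\sqrt{(\log m)/m})$. The whole plan is anchored on Lemma~\ref{support-and}, which provides the deterministic control $\sum_j B_j^A \le 1$ almost surely over the AND player's strategy; in particular, for every realization and every threshold $\epsilon > 0$, the ``dangerous set'' $S_\epsilon := \{j : B_j^A \ge \epsilon\}$ has cardinality at most $1/\epsilon$.

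The second step is to lower bound $U_{\mathrm{OR}}$ in the equilibrium via an explicit deviation. The natural choice is the following: the OR player bids a common amount $\epsilon$ on a uniformly random $k$-subset $T \subseteq M$. Since $|S_\epsilon| \le 1/\epsilon$ almost surely, the chance that $T \subseteq S_\epsilon$ is at most $\binom{1/\epsilon}{k}/\binom{m}{k} \le (1/(m\epsilon))^k$. Thus OR wins at least one item (collecting value $v$) with probability at least $1 - (1/(m\epsilon))^k$, while its total payment never exceeds $k\epsilon$. Hence
\[
   U_{\mathrm{OR}} \;\ge\; v\bigl(1-(1/(m\epsilon))^k\bigr) \;-\; k\epsilon,
\]
and already the plain choice $\epsilon = 2/m,\ k = \log_2 m$ gives $U_{\mathrm{OR}} \ge v - O((\log m)/m)$.

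Combining this with the elementary inequality $U_{\mathrm{OR}} = v\cdot\Pr[\text{OR wins something}] - \mathbb{E}[\text{OR payment}] \le v(1-\mu)$ yields $\mu \le 1 - U_{\mathrm{OR}}/v = O((\log m)/(vm)) = O((\log m)/\sqrt{m})$, whence $\mathrm{SW} \le \mu + v = O((\log m)/\sqrt{m})$, and the absolute constant in the bound is verified by plugging $v = 1/\sqrt{m}$ back into the $\mathrm{SW} \le \mu + v$ estimate.

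The main technical obstacle is that the direct implementation above loses a factor of $\sqrt{\log m}$ relative to the stated $3\sqrt{(\log m)/m}$: the two summands $v(1/(m\epsilon))^k$ and $k\epsilon$ seem to balance only at $\Theta((\log m)/m)$ using just the almost-sure bound $|S_\epsilon|\le 1/\epsilon$. Closing the gap looks delicate, and two promising refinements suggest themselves. One is to replace the loose cost estimate $k\epsilon$ by $k\epsilon\cdot U_{\mathrm{OR}}/(v-\epsilon)$ using the equilibrium identity $G_j(x) \le U_{\mathrm{OR}}/(v-x)$ (which follows since $(v-x)G_j(x)$ is OR's utility from any single pure deviation), and to solve the resulting self-referential inequality for $U_{\mathrm{OR}}$. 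The other is to combine the deviation with the marginal lower bound $\mathbb{E}[B_j^A] \ge v - U_{\mathrm{OR}} - U_{\mathrm{OR}}\ln(v/U_{\mathrm{OR}})$ (integrating $\Pr[B_j^A > x] \ge 1 - U_{\mathrm{OR}}/(v-x)$), which summed over $j$ and compared with $\sum_j \mathbb{E}[B_j^A] \le 1$ already constrains $U_{\mathrm{OR}}$ from below, and whose interaction with the combinatorial deviation should sharpen the bound to the target $\sqrt{(\log m)/m}$.
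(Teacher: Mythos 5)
Your deviation argument is precisely the paper's own: Lemma~\ref{support-and} gives $\sum_j B^A_j\le 1$ almost surely, hence at most $m/2$ items carry an AND bid of at least $2/m$; having the OR player bid $2/m$ on $\log_2 m$ random items wins some item with probability at least $1-1/m$ at cost at most $2(\log_2 m)/m$; comparing with $U_{\mathrm{OR}}\le (1-\mu)v$ yields $\mu\le 1/m+O\bigl((\log m)/(vm)\bigr)$ and $\mathrm{SW}\le v+\mu$. All of that is sound, and no refinement of the kind you sketch at the end is needed.

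The $\sqrt{\log m}$ ``gap'' you identify is not a weakness of the deviation; it comes entirely from fixing $v=1/\sqrt{m}$. That value is the parameter of the \emph{preceding} price-of-anarchy theorem; the phrase ``the parameters above'' in the present statement is a misdirection, and the paper's own proof switches to $v=\sqrt{(\log m)/m}$ (``For $v=\sqrt{(\log m)/m}$ it implies that $\alpha\le 2\sqrt{(\log m)/m}$''). Since $v$ is a free design parameter of the instance being constructed, you should choose it to balance the two terms in
\[
\mathrm{SW}\;\le\; v+\mu\;\le\; v+\frac{1}{m}+\frac{2\log_2 m}{vm},
\]
i.e.\ $v=\Theta\bigl(\sqrt{(\log m)/m}\bigr)$, which gives $\mathrm{SW}\le 3\sqrt{(\log m)/m}$ and hence $PoS\ge\sqrt{m/\log m}/3$ exactly as claimed. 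With $v=1/\sqrt{m}$ your bound $O((\log m)/\sqrt{m})$ is the best this argument delivers for that particular instance, and the stated constant-free $3\sqrt{(\log m)/m}$ is not what the paper proves there. So: keep everything you wrote up to the inequality $\mu\le 1/m+2(\log_2 m)/(vm)$, discard the two proposed ``refinements,'' and finish by optimizing over $v$.
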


\ignore{ Let us start with a slightly simpler proof of a weaker
$\Omega(m^{1/3})$ bound.

\begin{proof} (of weaker bound)
Assume we have a Nash equilibrium in which the AND
player wins with probability $\alpha$.
This implies that the expected utility of the OR player $u_{or}$ is at most $(1-\alpha)v$.
Also, the social welfare of the equilibrium is $(1-\alpha)v+\alpha\leq v+\alpha$.

By Lemma \ref{support-and} the AND player never plays a bid $b$ in
which the sum of the bids is larger than $1$.  This implies that the
AND player can have at most $m^{2/3}$ bids which are larger than
$1/m^{2/3}$. Therefore, if the OR player bid $1/m^{2/3}$ on a random
item, it will win with probability at least $1-1/m^{1/3}$. The OR
player utility from such a strategy is at least
$(1-1/m^{1/3})v-1/m^{2/3}$. This implies that in equilibrium,
\[
(1-\alpha)v \geq u_{or}\geq
(1-1/m^{1/3})v-1/m^{2/3}.
\]
For $v=1/m^{1/3}$ it implies that $\alpha
\leq 2/m^{1/3}$.
Therefore the social welfare is at most
$3/m^{1/3}$.
\end{proof}
}

\begin{proof}
Assume we have a Nash equilibrium in which the AND
player wins with probability $\alpha$.
This implies that the expected utility of the OR player $u_{or}$ is at most $(1-\alpha)v$.
Also, the social welfare of the equilibrium is $(1-\alpha)v+\alpha\leq v+\alpha$.

By Lemma \ref{support-and} the AND player never plays a bid $b$ in
which the sum of the bids is larger than $1$.  This implies that the
AND player can have at most half of the bids which are larger than
$2/m$. Therefore, if the OR player bid $2/m$ on $\log m$ random
items, it will win some item with probability at least $1-1/m$. The
OR player utility from such a strategy is at least $(1-1/m)v-(\log
m)/m$. This implies that in equilibrium,
\[
(1-\alpha)v \geq u_{or}\geq
(1-1/m)v-(\log m)/m.
\]
For $v=\sqrt{(\log m)/m}$ it implies that $\alpha
\leq 2\sqrt{(\log m)/m}$.
Therefore the social welfare is at most
$3\sqrt{(\log m)/m}$.
\end{proof}

Finally we study examples in which there are multiple equilibria, and show that they can be far apart from one another:

\begin{theorem}
There is a set of valuations such that in the corresponding
simultaneous first price auction there is an  efficient (pure) Nash
equilibrium, as well as an inefficient one, where the inefficiency
is at least by a factor of $\sqrt{m}/2$. Equivalently, the
corresponding auction has $PoS=1$ but $PoA \geq \sqrt{m}/2$.
\end{theorem}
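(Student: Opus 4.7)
The plan is to combine the bipartite single-minded analysis of Theorem \ref{thm-single-mind} with a grid structure that happens to admit a Walrasian equilibrium. Set $K=\sqrt{m}$ and arrange the $m=K^2$ items as the cells of a $K\times K$ grid. Introduce $2K$ single-minded players: for each $i \in \{1,\ldots,K\}$ a ``row player'' $R_i$ whose only desired set is the $K$ cells of row $i$ (with value $1$), and similarly a ``column player'' $C_j$ for each column $j$. Two distinct rows are disjoint, two distinct columns are disjoint, and a row and a column share exactly one cell, so this is precisely the $k=K$, $d=2$ instance of Theorem \ref{thm-single-mind}.

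To establish $PoS = 1$ I would exhibit an efficient Walrasian equilibrium. Take $p_j = 1/K$ on every item and assign row $i$ to $R_i$, with every column player receiving the empty set. At these prices $R_i$ is indifferent between his full row (utility $1 - K\cdot(1/K) = 0$) and the empty set, and the same holds for each $C_j$, so every player receives a demanded bundle. The welfare is $K$, and this is optimal because the desired sets intersect in a complete bipartite pattern, so any maximum collection of simultaneously-satisfiable single-minded sets is either ``all rows win'' or ``all columns win'', giving at most $K$ winners. Proposition \ref{proposition:pure} then turns this Walrasian equilibrium into a pure Nash equilibrium of the auction game.

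For the $PoA$ lower bound, I would invoke Theorem \ref{thm-single-mind} directly: the profile where every player draws $x$ from $G(x) = (Kx)^{1/(K-1)}$ on $[0, 1/K]$ and bids $x$ on each of his $K$ items is a mixed Nash equilibrium. Conditional on $R_i$'s draw $x$, the pairwise-intersection-$\le 1$ structure ensures that on each of $R_i$'s $K$ items his sole opponent is a \emph{distinct} column player, and these $K$ column players' bids are i.i.d.\ copies of $G$. Hence via the substitution $U = G(x) \sim U[0,1]$,
\[
\Pr[R_i \text{ wins his row}] \;=\; E_{x \sim G}\bigl[G(x)^K\bigr] \;=\; E[U^K] \;=\; \frac{1}{K+1}.
\]
By symmetry every one of the $2K$ players wins his set with probability $1/(K+1)$, so linearity of expectation gives expected welfare $2K/(K+1) \le 2$, while the optimum is $K = \sqrt{m}$. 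The inefficiency ratio is therefore at least $\sqrt{m}/2$.

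The main obstacle in this plan will be the welfare computation for the mixed equilibrium. The key observation is that, thanks to the pairwise-intersection-$\le 1$ property, the $K$ bidders that $R_i$ competes against across his row are $K$ distinct column players, so their bids factorize into $K$ independent copies of $G$ and $R_i$'s probability of sweeping the row reduces cleanly to the moment $E[U^K] = 1/(K+1)$ of a uniform random variable. Once this independence is in hand, the conclusion $PoA \ge \sqrt{m}/2$ follows immediately.
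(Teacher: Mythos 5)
Your proposal is correct and follows essentially the same route as the paper: the identical $\sqrt{m}\times\sqrt{m}$ grid of row/column single-minded bidders (the paper scales values to $\ell$ rather than $1$, which changes nothing), the all-ones Walrasian/pure equilibrium for $PoS=1$, and the symmetric mixed equilibrium of Theorem \ref{thm-single-mind} for the $PoA$ bound. The only difference is cosmetic: where the paper bounds the expected number of satisfied players by $2$ via the crude estimate that $k$ players are satisfied with probability at most $2^{-k}$, you compute the winning probability exactly as $E[U^K]=1/(K+1)$, which is a cleaner and tighter version of the same step.
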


\begin{proof}
 Consider $m=\ell^2$ items, which are labeled by $(i,j)$ for
$i,j\in[1,\ell]$. Now we analyze $2\ell$ single minded bidder, where
for each $i\in [1,\ell]$ we have a bidder that wants all the items
in $(i,*)$, we call those bidders {\em row bidders}. For each $j\in
[1,\ell]$ we have a bidder that want $(*,j)$, and we call them {\em
column bidders}. All bidders have value $\ell$ for their set. Note
that there is no allocation where both a row and a column players
are satisfied, where a player is satisfied if it is allocated all
the items in his set.
The social optimum value is $\ell^2$ (satisfying all the row bidders
or all the column bidders). In this game there is a Walresian
equilibrium, where the price of each item is $1$. Similarly, there
is a pure Nash equilibrium where all bidders bid $1$ for each item
and we break the ties in favor to all the row players (or
alternatively, to all the column players). This implies that the PoS
is $1$. Note that this game has also a mixed Nash equilibrium
(Section \ref{sec-triangle}, Theorem~\ref{thm-single-mind}). Since
it is a symmetric equilibrium, in which every player bids the same
value on all items, the expected number of satisfied players is at
most $2$ (since the probability of $k$ satisfied players is at most
$2^{-k}$). This implies that the PoA is $\ell/2=\sqrt{m}/2$.
\end{proof}

\section{Approximate Welfare Analysis}

 \ignore{
\subsection{Preliminaries and notation}

\ignore{ Let $N$ denote the set of buyers, and $M$ denote the set of
items, with $|N|=n$ and $|M|=m$. The auction is the  simultaneous
first price auction. Let $b_i$ be a bid vector for player $i$, and
$b_{i,j}$ be player $i$'s bid for item $j$. Let $b_{\bf -i}$ denote
$b_1, \ldots b_{i-1},  b_{i+1}, \ldots b_n$, the set of bid vectors
of all the players except player $i$. Finally, let $u_i(\hat{b}_i :
b_{\bf -i})$ denote the utility of player $i$, when all player
except $i$ play according to $b_{\bf -i}$ and player $i$ plays
according to $\hat{b}_i$.}

A fractional cover of a set $T \subset M$ is a list numbers
$\lambda_1, \ldots, \lambda_k$ and sets $S_1, \ldots, S_k \subset M$
such that for every item $j \in T$ we have
\[\sum_{i: \ j\in S_i} \lambda_i \ge 1 \ .\]
We say that a valuation $v$ is $\beta$ fractionally subadditive if
for every fractional cover of a set $T$, we have
\[\sum_i \lambda_i v(S_i) \ge \frac{v(T)}{\beta} \ . \]
We say that a valuation $v$ is fractionally subadditive if it is
$\beta$ fractionally subadditive for $\beta=1$. Feige \cite{Feige09}
showed that fractionally subadditive valuations are the same as XOS
valuations which are the maximum of a set of additive valuations.

Bhawalkar and Roughgarden \cite{BhawalkarR11} (see also Feige \cite{Feige09})
show that if a valuation $v$ is subadditive, it is also $\log m$
fractionally subadditive.

We use Proposition 5.3 of Bhawalkar and Roughgarden \cite{BhawalkarR11} (see
also Lehmann, Lehmann and Nisan \cite{lehmann2006combinatorial})

\begin{lemma} \label{exist-a}
Let $v : 2^{M} \rightarrow \mathcal{R}$ be a $\beta$ fractionally
sub-additive valuation. Then for every subset $T\subseteq M$ there
is a vector $a$, such that
\[ \sum_{j \in T} a_j \ge v(T)/\beta\]
and
\[ v(S) \ge \sum_{j \in S} a_j \]
for all sets $S \subset M$.
\end{lemma}

Note that given a set $T$, there must also be a vector $a$
satisfying the conditions of Lemma \ref{exist-a} for which $a_j=0$
for $j\not\in T$.
}


In this section we analyze the Price of Anarchy of the simultaneous
first-price auction. We start with a simple proof of an $O(m)$ upper
bound on the price of anarchy for general valuations. Then we
consider $\beta$-XOS valuations (which are equivalent to
$\beta$-fractionally subadditive valuations) and prove an upper
bound of $2\beta$.
Since subadditive valuations are $O(\log m)$ fractionally
subadditive \cite{Feige09,BhawalkarR11} we also get an upper bound
of $O(\log m)$ on the price of anarchy for subadditive valuations.


Assume that in $OPT$ player $i$ gets set $O_i$ and receives value
$o_i=v_i(O_i)$. Let $k_i$ be $|O_i|$. Let $e_i$ be the expected
value player $i$ gets in an equilibrium and let $u_i$ be the
expected utility of player $i$ in an equilibrium. Let $r_i$ be the
expected sum of payments in equilibrium over all items in $O_i$
(these are not necessarily won by player $i$ in equilibrium).

Denote the total welfare, revenue, and utility in equilibrium by
$SW(eq)$, $REV(eq)$, and $U(eq)$, respectively. By definitions we
have: (1) $SW(eq)=\sum_i e_i$, (2) $SW(OPT)=\sum_i o_i$, (3)
$REV(eq)=\sum_i r_i \leq SW(eq)$, (4) $U(eq)=\sum_i u_i =
SW(eq)-REV(eq)$.

\ignore{
\begin{lemma} \label{PoA-general}
For each $i$, $2u_i \geq o_i - 4k_i r_i$.
\end{lemma}

\begin{proof}
By Markov, with probability of at least $1/2$ the total sum of
prices of items in $O_i$ is at most $2r_i$.  Thus if player $i$ bids
$2r_i$ for each item in $O_i$ (and $0$ elsewhere) he wins all items
with probability of at least $1/2$, getting expected value of at
least $o_i/2$, and paying at most  $2k_i r_i$. Since we were in
equilibrium this utility must be at most $u_i$.
\end{proof}
}

\begin{theorem}
For any set of buyers the PoA is at most $4m$.
\end{theorem}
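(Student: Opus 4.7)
The plan is a deviation argument at the per-player level, summed over all players. My key step is to re-establish the (commented-out) per-player lemma $2 u_i \geq o_i - 4 k_i r_i$.

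To prove this lemma I would have player $i$ consider the pure deviation of bidding $2 r_i$ on every item of $O_i$ and $0$ on every other item. Since $r_i = E\bigl[\sum_{j \in O_i} p_j\bigr]$ where $p_j$ is the equilibrium winning bid on item $j$, Markov's inequality gives $\Pr\bigl[\sum_{j \in O_i} p_j \leq 2 r_i\bigr] \geq 1/2$. On that event each individual $p_j$ with $j \in O_i$ is at most $2 r_i$, so the deviation wins all of $O_i$, collecting value at least $o_i$ and paying exactly $2 k_i r_i$. On the complementary event, the deviation's utility is at worst $-2 k_i r_i$, since the total bid is at most $2 k_i r_i$ and values are nonnegative by free disposal. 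Taking expectations, the deviation's expected utility is at least $o_i/2 - 2 k_i r_i$, and equilibrium rationality forces $u_i$ to dominate this, proving the lemma.

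Next I would sum the lemma over all players. Because $\{O_i\}$ is a partition of $M$, we have $\sum_i r_i = REV(eq)$, and using $k_i \leq m$ this gives
$$2 U(eq) \;\geq\; SW(OPT) - 4 m \cdot REV(eq).$$
Substituting $U(eq) = SW(eq) - REV(eq)$ and rearranging,
$$SW(OPT) \;\leq\; 2 SW(eq) + (4m - 2)\, REV(eq) \;\leq\; 4m \cdot SW(eq),$$
where the last step uses $REV(eq) \leq SW(eq)$, which in turn follows because every player can guarantee nonnegative utility by bidding $0$, so equilibrium utilities are nonnegative and individual payments do not exceed individual values.

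The main subtlety I expect is the deviation argument under an arbitrary (possibly randomized) tie-breaking rule: if some other bidder happens to place the same bid $2 r_i$, player $i$ could lose an item at that threshold. This is resolved exactly as in Section 3 by bidding $2 r_i + \epsilon$ on each item of $O_i$ and letting $\epsilon \to 0$; the extra payment is at most $m\epsilon$, winning probability only weakly increases, and the inequality $u_i \geq o_i/2 - 2 k_i r_i$ survives in the limit. This is the only step where anything beyond bookkeeping is required.
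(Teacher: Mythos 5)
Your proof is correct and follows essentially the same route as the paper: the same per-player deviation lemma $2u_i \ge o_i - 4k_i r_i$ via Markov's inequality applied to the total price of $O_i$, summed over players with $\sum_i k_i \le m$ and $REV(eq)\le SW(eq)$. Your explicit handling of ties via an $\epsilon$ perturbation and the explicit rearrangement $2SW(eq)+(4m-2)REV(eq)\le 4m\,SW(eq)$ are details the paper leaves implicit, but the argument is the same.
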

\begin{proof}
We first show that for each buyer $i$, we have $2u_i \geq o_i - 4k_i
r_i$.

By Markov, with probability of at least $1/2$ the total sum of
prices of items in $O_i$ is at most $2r_i$. Thus if player $i$ bids
$2r_i$ for each item in $O_i$ (and $0$ elsewhere) he wins all items
with probability of at least $1/2$, getting expected value of at
least $o_i/2$, and paying at most $2k_i r_i$. Since we were in
equilibrium this utility must be at most $u_i$. Hence, $2u_i \geq
o_i - 4k_i r_i$.

Summing over all buyers, and bounding $\sum_i k_i \leq m$, we get
that $OPT \leq 2U(eq)+4mRev(eq) \leq 4mSW(eq)$.
\end{proof}

\ignore{

 The same approach gives a better bound for XOS valuations.

\begin{lemma} \label{lem:factor4}
\label{PoA-XOS} Assume that the valuations of all the players are
XOS. Then for each $i$, $2u_i \geq o_i - 4 r_i$.
\end{lemma}

\begin{proof}
Let $f_j$ be the expected price of item $j$. By Markov, with
probability of at least $1/2$ the the price of item $j$ is at most
$2f_j$.  Thus if player $i$ bids $2f_j$ for each item $j\in O_i$
(and $0$ elsewhere) he wins each item with probability of at least
$1/2$. Since the valuations are XOS this gives him an expected value
of at least $o_i/2$, while paying at most $\sum_{j\in O_i}2f_j= 2
r_i$. Since we were in equilibrium this utility must be at most his
utility in equilibrium which is at most $u_i$.
\end{proof}

\begin{theorem} \label{thm:factor4}
For any set of XOS buyers the PoA is at most $4$.
\end{theorem}
\begin{proof}
Summing Lemma~\ref{PoA-XOS} over all $i$'s, we get: $OPT \leq
2U(eq)+4Rev(eq) \leq 4SW(eq)$.
\end{proof}
}

A function $v$ is $\beta$-XOS, if there exists an XOS function $X$
such that for any set $S$ we have $v(S)\geq X(S)\geq v(S)/\beta$,
i.e., if there are numbers $\lambda_{j,l}$, $j\in M$ and $l\in L$,
such that for any set $S$ we have
\[
v(S) \geq \max_{k\in L} \sum_{j\in S} \lambda_{j,k} \geq v(S)/\beta
\]

The equivalence of $\beta$-XOS and  $\beta$ fractionally
sub-additive follows the same proof as in \cite{Feige09}.

\begin{theorem} \label{thm:XOSfactor2}
\label{PoA-XOSbeta} Assume that the valuations of all the players are
$\beta$-XOS.
Then the PoA is $2\beta$.
\end{theorem}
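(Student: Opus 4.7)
The plan is to adapt a smoothness-style deviation argument. For each player $i$ I extract from the $\beta$-XOS property an additive lower bound tailored to the optimal bundle $O_i$: let $k^*\in L$ be an index maximizing $\sum_{j\in O_i}\lambda_{j,k}$, set $a_j=\lambda_{j,k^*}$ for $j\in O_i$, and $a_j=0$ for $j\notin O_i$. The $\beta$-XOS inequalities then give $\sum_{j\in O_i} a_j\ge o_i/\beta$ and, using non-negativity of the XOS coefficients, $v_i(T)\ge \sum_{j\in T}a_j$ for every $T\subseteq M$. These weights will parametrize a single unilateral deviation per player.

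The deviation I analyze is: player $i$ bids $a_j/2$ on each $j\in O_i$ and $0$ elsewhere. Let $\pi_j$ denote the maximum bid on item $j$ by players other than $i$, a random variable under the equilibrium mixed strategies. Under the deviation, player $i$ wins exactly the set $W=\{j\in O_i:\pi_j<a_j/2\}$; using $v_i(W)\ge \sum_{j\in W}a_j$ and a payment of $a_j/2$ per won item, the expected utility of deviating is at least $\tfrac12\sum_{j\in O_i} a_j\Pr[\pi_j<a_j/2]$. The best-response condition at equilibrium then yields
\[
2u_i\;\ge\;\sum_{j\in O_i} a_j\,\Pr[\pi_j<a_j/2].
\]
The complementary mass is absorbed by revenue: for every $j\in O_i$, whenever $\pi_j\ge a_j/2$ the equilibrium winning price on item $j$ (the max over all players, which dominates $\pi_j$) is at least $a_j/2$, so summing over $j\in O_i$ gives $2r_i\ge \sum_{j\in O_i} a_j\Pr[\pi_j\ge a_j/2]$.

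Adding the two inequalities gives $\sum_{j\in O_i} a_j\le 2u_i+2r_i$, and combining with $\sum_{j\in O_i} a_j\ge o_i/\beta$ yields $o_i\le 2\beta(u_i+r_i)$. Summing over $i$ and invoking the identities $\sum_i r_i=REV(eq)$ and $U(eq)+REV(eq)=SW(eq)$ stated just before the theorem gives $SW(OPT)\le 2\beta\,SW(eq)$, i.e.\ $PoA\le 2\beta$. The only technical wrinkle is tie-breaking at the event $\{\pi_j=a_j/2\}$: under an adversarial rule one should instead deviate to $a_j/2+\epsilon$ and send $\epsilon\to 0$, which costs at most $\epsilon m$ in the right-hand side and disappears in the limit. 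I expect this to be the sole obstacle; the conceptual core is the clean split of each $a_j$ into the ``win cheaply'' half (bounded by $u_i$) and the ``lose but drive up the price'' half (bounded by $r_i$), with the XOS decomposition precisely giving the additive surrogate that makes the split item-by-item.
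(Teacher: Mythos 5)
Your proof is correct, but the deviation you analyze is not the one the paper uses, and the two halves of the bound are produced differently. The paper's deviation for player $i$ is $b_{i,j}=\min\{\lambda_{j,k},2f_j\}$ for $j\in O_i$ (and $0$ elsewhere), where $f_j$ is the \emph{expected equilibrium price} of item $j$; Markov's inequality then guarantees that whenever the bid equals $2f_j$ the item is won with probability $\alpha_j\ge 1/2$, and the chain $\sum_{j\in O_i}\alpha_j(\lambda_{j,k}-b_{i,j})\ge\frac12\sum_{j\in O_i}(\lambda_{j,k}-2f_j)\ge o_i/(2\beta)-r_i$ delivers $u_i\ge o_i/(2\beta)-r_i$ in one stroke. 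You instead bid the fixed vector $a_j/2$ and, for each item separately, split the probability mass at the threshold $a_j/2$: the mass strictly below it is charged to $u_i$ (you win cheaply) and the mass at or above it is charged to $r_i$ (the equilibrium price on that item is then at least $a_j/2$), giving $\sum_{j\in O_i}a_j\le 2u_i+2r_i$ and hence the same $o_i\le 2\beta(u_i+r_i)$. Both routes then sum over players and use $U(eq)=SW(eq)-REV(eq)$. Your deviation has the advantage of being independent of the equilibrium — it is the standard smoothness deviation, requiring no knowledge of the prices $f_j$ — which is the cleaner template for the extensions to coarse-correlated and Bayesian equilibria that the paper invokes; the paper's Markov-based deviation is slightly shorter to verify but needs the $\min$ with $\lambda_{j,k}$ to keep every per-item term nonnegative. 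One small remark: your $\epsilon$-perturbation for ties is not actually needed, since your utility bound only uses the strict-inequality event $\pi_j<a_j/2$, on which player $i$ wins under any tie-breaking rule, while the complementary event $\pi_j\ge a_j/2$ is exactly what feeds the revenue bound, so the two events already partition the probability space.
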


\begin{proof}
Since $v$ is $\beta$-XOS, there is a $k\in L$
such that $\sum_{j\in O_i} \lambda_{j,k}\geq v_i(O_i)/\beta$, and
for any set $S$, we have  $v(S)\geq \sum_{j\in S} \lambda_{j,k}$.
Let $f_j$ be the expected price of item $j$. By Markov inequality,
with probability of at least $1/2$ the  price of item $j$ is at most
$2f_j$.  Consider the deviation where player $i$ bids
$bid_{i,j}=\min\{\lambda_{j,k},2f_j\}$ for each item $j\in O_i$ (and
$0$ elsewhere). Player $i$ wins each item $j$ with probability
$\alpha_j$ and if $bid_{i,j}=2f_j$ then $\alpha_j\geq 1/2$. Let
$S_i$ be the set of item that player $i$ wins with his deviation
bids $bid_{i,j}$. (Note that $S_i$ is a random variable that depends
on the random bids of the other players.)
The expected utility of player $i$ from the deviation is,
\begin{eqnarray*}
E[v_i(S_i)-\sum_{j\in S_i} bid_{i,j}] & \geq &
\sum_{j\in O_i} \alpha_j (\lambda_{j,k} -bid_{i,j})\\
&\geq &
\sum_{j\in O_i} \frac{1}{2} (\lambda_{j,k}-bid_{i,j})\\
&\geq &
\sum_{j\in O_i} \frac{1}{2} (\lambda_{j,k}-2f_j)\\
&\geq & \frac{1}{2\beta}v_i(O_i)- \sum_{j\in O_i} f_j\;.
\end{eqnarray*}
Since player $i$ was playing an equilibrium strategy, we have that
$u_i \geq E[v_i(S_i)-\sum_{j\in S_i} bid_{i,j}]$. Summing over all
players $i$'s, and recalling that $REV(eq)=\sum_{j\in M}f_j$, we
get,
\begin{eqnarray*}
E[SW(eq)]-REV(eq) & = & \sum_{i=1}^n u_i\\
& \geq &
\frac{1}{2\beta}E[SW(OPT)] - REV(eq),
\end{eqnarray*}
which completes the proof.
\end{proof}



\section{Bayesian Price of Anarchy}

In a Bayesian setting
there is a known prior distribution $Q$ over the valuations of the players.
We first
sample $v\sim Q$ and inform each player $i$ his valuation $v_i$.
Following that, each player $i$  draws his bid from the distribution
$D_i(v_i)$, i.e., given a valuation $v_i$ he bids $(b_{i,1}, \ldots
, b_{i,m})\sim D_i(v_i)$.
The distributions  $D(v)=(D_1(v_1), \dots , D_n(v_n))$ are
a Bayesian Nash equilibrium if each $D_i(v_i)$ is a best response of
player $i$, given that its valuation is $v_i$ and the valuations
are drawn from $Q$.

We start with the general case, where the distribution over
valuations is arbitrary and the valuations are also arbitrary. Later
we study product distributions over $\beta$-XOS valuations.

\begin{theorem}
For any prior distribution $Q$ over the players valuations, the
Bayesian PoA is at most $4mn+2$.
\end{theorem}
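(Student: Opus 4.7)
The plan is to extend the $O(m)$ PoA proof from the full-information setting to the Bayesian setting using a resampling deviation, in the spirit of the smoothness arguments used in e.g.~\cite{BhawalkarR11}. The crucial trick is that each player $i$, even though he only sees his own valuation $v_i$, can simulate a fresh hypothetical profile: he draws $w_{-i}$ from the conditional distribution $Q_{-i}\mid v_i$, so that $(v_i,w_{-i})$ has exactly the same joint law as a fresh draw from $Q$. He then computes the welfare-optimal allocation $O^*(v_i,w_{-i})$ on this hybrid profile and targets the bundle $S_i=O_i^*(v_i,w_{-i})$ assigned to him.

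Concretely, I would mimic the argument that established $2u_i\ge o_i-4k_ir_i$ in the full-information proof. Define $r_i(v_i,w_{-i})=E_{v_{-i}\sim Q_{-i}\mid v_i}\!\left[\sum_{j\in S_i}p_j\right]$, the expected total equilibrium price on $S_i$ given what player $i$ knows and has sampled. Player $i$'s deviation is to bid $2r_i(v_i,w_{-i})$ on every item of $S_i$ and $0$ elsewhere. By Markov applied in the conditional world, with probability at least $1/2$ the sum of the equilibrium prices over $S_i$ does not exceed $2r_i$, and in that event no single competing bid on an item of $S_i$ exceeds $2r_i$; thus player $i$ wins all of $S_i$, collecting value $v_i(S_i)$ and paying at most $2r_i\cdot|S_i|$. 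This gives the per-player lower bound $u_i(v_i)\ge\tfrac{1}{2}E[v_i(S_i)\mid v_i]-2E[\,|S_i|\,r_i(v_i,w_{-i})\mid v_i]$.

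Now I would take expectation over $v_i$ and sum over $i$. Because $(v_i,w_{-i})$ has the same joint distribution as $v$, the first term telescopes to $\tfrac12\sum_i E_v[v_i(O_i^*(v))]=\tfrac12\,OPT$. For the second term I would use the crude bounds $|S_i|\le m$ and $r_i(v_i,w_{-i})\le E[Rev(eq)\mid v_i]$, so that summing over $i$ yields at most $2mn\,R$ where $R=E[Rev(eq)]$. Combining with $E[U(eq)]=E[SW(eq)]-R$ and $R\le E[SW(eq)]$ gives, after a short rearrangement, an inequality of the form $(4mn+2)\,E[SW(eq)]\ge OPT$, which is the desired bound.

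The main obstacle is that $Q$ is not assumed to be a product distribution, so $w_{-i}$ drawn conditionally on $v_i$ need not be independent of $v_{-i}$, and $|S_i|$, $r_i$, and the actual prices are all correlated through $v_i$. This is exactly why the simple uniform bound $r_i\le R$ cannot be used inside the sum; one is forced to upper bound each $r_i$ by the conditional revenue $E[Rev\mid v_i]$ and pay the cost of a union-style sum over the $n$ players, which is the source of the factor-$n$ loss relative to the non-Bayesian bound of $4m$. Handling this correlation cleanly, and ensuring that the Markov inequality is applied in the correct conditional probability space (conditioning on both $v_i$ and the resampled $w_{-i}$), is the place where the proof must be done carefully.
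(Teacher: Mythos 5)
Your proposal is correct, but it takes a genuinely different (and heavier) route than the paper. The paper's deviation is much blunter: given $v_i$, player $i$ simply bids $2\,Rev(v_i)$ on \emph{every} item in $M$, where $Rev(v_i)$ is the conditional expected total revenue; by Markov he wins all of $M$ with probability at least $1/2$, and by free disposal $v_i(M)\ge v_i(O_i(v))$ pointwise, so no resampling of $w_{-i}$ and no computation of a hypothetical optimal allocation are needed at all --- summing $u_i(v_i)\ge v_i(M)/2-2m\,Rev(v_i)$ over $i$ and over $v$ gives $(4mn+2)E[SW(D)]\ge E[SW(OPT(v))]$ directly. Your version is the Bhawalkar--Roughgarden resampling template, which the paper reserves for its sharper $4\beta$ bound under product priors over $\beta$-XOS valuations; instantiated with the crude bounds $|S_i|\le m$ and $r_i(v_i,w_{-i})\le E[Rev\mid v_i]$ it collapses to the same $O(mn)$ loss (your arithmetic in fact yields $OPT\le 2\,SW+(4mn-2)R\le 4mn\cdot SW$, marginally better than $4mn+2$). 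All the delicate steps you flag do go through: $(v_i,w_{-i})$ has the law of $Q$ so the value terms sum to $\tfrac12 OPT$, the fresh sample $w_{-i}$ is independent of the realized $v_{-i}$ given $v_i$ so Markov applies in the conditional world, and the deviation depends only on $v_i$, $Q$, and the public equilibrium strategies, so it is a legal Bayesian deviation. The trade-off is that your argument unifies this theorem with the $4\beta$ one, while the paper's is shorter and uses nothing beyond monotonicity; for the general-$Q$, general-valuation statement the grand bundle $M$ already does the job.
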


\begin{proof}
Fix a Bayesian Nash equilibrium $D=(D_1, \dots , D_n)$
as described above. Let $Q_{v_i}$ be the distribution on $v_{-i}$ obtained
by conditioning $Q$ on $v_i$ as the value of player $i$.

Let $u_i(v_i)$ be the expected utility of player $i$ when his
valuation is $v_i$, i.e., $u_i(v_i)=E_{b_i\sim
D_i(v_i)}E_{v_{-i}\sim Q_{v_i}}E_{b_{-i}\sim
D_{-i}}[v_i(S_i)-\sum_{j\in S_i}b_{i,j}]$, where $S_i$ is the set of
items that player $i$ wins with the set of bids $b$. Let $u_i$ be
the expected utility of player $i$, i.e., $E_{v_i\sim Q}[u_i(v_i)]$.

For any valuation $v_i$ for player $i$, consider the following
deviation. Let $Rev(v_i)$ be the expected revenue given that the
valuation of player $i$ is $v_i$, i.e., $Rev(v_i)=E_{v\sim
Q_{v_i}}[\sum_{j=1}^m\max_k b_{k,j}]$. Consider the deviation where player
$i$ bids $2 Rev(v_i)$ on each item $j \in M$. By  Markov
inequality, he will win all the items $M$ with probability at least
$1/2$. Therefore, his utility from the deviation is at least
\[
v_i(M)/2 - 2m Rev(v_i)
\]
Since this is an equilibrium, we have that
\[
u_i(v_i) \geq v_i(M)/2 - 2m Rev(v_i)
\]
Summing over the players and taking the expectation with respect to
$v$,
\begin{eqnarray*}
\sum_{i=1}^n E_v[u_i(v_i)] \geq \sum_{i=1}^n E_v[v_i(M)/2 - 2m
Rev(v_i)]
\end{eqnarray*}
Clearly $\sum_{i=1}^n E_v[u_i(v_i)] \leq E_v(SW(D))$, where
$E_v(SW(D))$ is the expected social welfare of the Bayesian
equilibrium $D$. Also, $\sum_{i=1}^n E_v[v_i(M)]\geq
E_v[SW(OPT(v))]$. Finally, for every player $i$,
$E_{v}[Rev(v_i)]=Rev$, where $Rev$ is the expected revenue.
Therefore,
\begin{eqnarray*}
E_v[SW(D)] \geq E_v[SW(OPT(v))]/2 - 2mn Rev
\end{eqnarray*}
Since $Rev\leq E_v[SW(D)]$, we have that,
\begin{eqnarray*}
(4mn+2)E_v[SW(D)] \geq E_v[SW(OPT(v))]
\end{eqnarray*}
\end{proof}

The following theorem show that the Bayesian PoA is at most $4\beta$
when the valuations are limited to $\beta$-XOS and the distribution
$Q$ over valuations is a product distribution. The proof uses the
ideas presented in \cite{BhawalkarR11}.

\begin{theorem}
For a product distribution $Q$ over  $\beta$-XOS
valuations of the players, the Bayesian PoA is at most $4\beta$.
\end{theorem}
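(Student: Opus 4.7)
The plan is to follow the smoothness template of Theorem~\ref{PoA-XOSbeta}, adapted to the Bayesian setting by exploiting a hallucination trick that only becomes available under a product prior. Fix any Bayesian-Nash equilibrium $D = (D_1, \ldots , D_n)$ and let $f_j = E_v E_{b \sim D(v)}[\max_k b_{k,j}]$ denote the expected price of item $j$, so that $\sum_j f_j = Rev$. For a player $i$ with true valuation $v_i$, consider the following deviation: draw $\tilde{v}_{-i} \sim Q_{-i}$ internally and independently of everything, compute the socially optimal allocation for the hallucinated profile $(v_i, \tilde{v}_{-i})$, and let $\tilde{O}_i$ denote the bundle it assigns to $i$. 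Using the $\beta$-XOS property of $v_i$ on $\tilde{O}_i$, obtain additive coefficients $\lambda_j$ with $\sum_{j \in \tilde{O}_i} \lambda_j \ge v_i(\tilde{O}_i)/\beta$ and $v_i(S) \ge \sum_{j \in S} \lambda_j$ for every $S$. The deviation bids $\min\{\lambda_j, 2 f_j\}$ on each $j \in \tilde{O}_i$ and $0$ on all other items.

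The item-by-item accounting mirrors the proof of Theorem~\ref{PoA-XOSbeta}: on items with $2 f_j \le \lambda_j$, Markov's inequality gives a winning probability of at least $1/2$ and an expected surplus of at least $(\lambda_j - 2 f_j)/2$; on items with $\lambda_j < 2 f_j$ the surplus is nonnegative. Summing over $j \in \tilde{O}_i$ (and folding the nonpositive terms in as slack) yields a conditional lower bound of $v_i(\tilde{O}_i)/(2\beta) - \sum_{j \in \tilde{O}_i} f_j$ on the deviation's expected utility. Because $D_i(v_i)$ is a best response, this quantity must not exceed $u_i(v_i)$.

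The crux is the coupling. Since $Q$ is a product distribution, the pair $(v_i, \tilde{v}_{-i})$ has the same law as $v \sim Q$, so after taking expectations we can identify $E[v_i(\tilde{O}_i)]$ with $E_v[o_i(v)]$ and $E[\sum_{j \in \tilde{O}_i} f_j]$ with $E_v[\sum_{j \in O_i^*(v)} f_j]$, where $O_i^*(v)$ is $i$'s true OPT bundle. Summing over $i$ and using that $\{O_i^*(v)\}_i$ partitions $M$ turns the second term into $Rev$ and the first into $E_v[SW(OPT(v))]$. Combining this with $\sum_i E_{v_i}[u_i(v_i)] = E_v[SW(D)] - Rev$ and $Rev \le E_v[SW(D)]$ produces the advertised $O(\beta)$ bound.

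I expect the main obstacle to be the coupling step itself: it is the only place that genuinely uses the product-distribution hypothesis, and it is what separates this $O(\beta)$ guarantee from the weaker $O(mn)$ bound of the preceding theorem (for a correlated $Q$ the pair $(v_i,\tilde{v}_{-i})$ need not match the law of $v$, and the argument collapses). A secondary technical point is that the Markov step should really be applied to $E[\max_{k \neq i} b_{k,j}]$ rather than to $f_j$ to avoid a circular dependence on the deviator's own bid; since the former is bounded by $f_j$ this costs at most a constant factor and is the likely source of the $4\beta$ constant in the statement rather than $2\beta$.
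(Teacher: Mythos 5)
Your proposal is correct and follows essentially the same route as the paper's proof: the sampled-valuation (``hallucination'') deviation, Markov's inequality applied to expected item prices, the $\beta$-XOS supporting additive valuation on the hallucinated optimal bundle, and the product-prior coupling that identifies $(v_i,\tilde v_{-i})$ with a fresh draw from $Q$ so that summing over the OPT partition recovers $E_v[SW(OPT(v))]$ and $Rev$. The only difference is cosmetic: you bid $\min\{\lambda_j,2f_j\}$ as in the full-information Theorem~\ref{thm:XOSfactor2}, whereas the paper bids $2f_j(v_i)$ flat; your variant in fact yields the slightly stronger constant $2\beta$.
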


\begin{proof}
Fix a Bayesian Nash equilibrium $D=(D_1, \dots , D_n)$
as described above. Let $Q_{v_i}$ be the distribution on $v_{-i}$ obtained
by conditioning $Q$ on $v_i$ as the value of player $i$.

Consider the following deviation of player $i$, given its valuation
$v_i$. Player $i$ draws $w_{-i}\sim Q_{v_i}$, that is $w_{-i}$ are
random valuations of the other players, conditioned on player $i$
having valuation $v_i$. Player $i$ computes the optimal allocation
$OPT(v_i,w_{-i})$, and in particular his share $OPT_i(v_i,w_{-i})$
in that allocation. Player $i$ bids $2f_j(v_i)$ on each item $j\in
OPT_i(v_i,w_{-i})$ , where $f_j(v_i)$ is the expected maximum bid of
the other players on item $j$ in the equilibrium $D$ conditioned on
player $i$ having valuation $v_i$, i.e., $$f_j(v_i)=E_{w_{-i}\sim
Q_{v_i}}E_{b_{-i}\sim D_{-i}(w_{-i})}[\max_{k\neq i} b_{k,j}]\ .$$
By Markov inequality player $i$ wins each item $j\in OPT_i(v_i,$
$w_{-i})$ with probability at least half. Since $v_i$ is an
$\beta$-XOS valuation, its expected value is at least
$v_i(OPT_i(v_i,w_{-i}))/(2\beta)$ so the utility of player $i$ in
this deviation is at least
\[
E_{w_{-i}\sim Q_{v_i}}[v_i(OPT_i(v_i,w_{-i}))/(2\beta) - \sum_{j\in
OPT_i(v_i,w_{-i})} 2f_j(v_i)]
\]

Let $u_i(v_i)$ be the expected utility of player $i$ when his
valuation is $v_i$, i.e., $u_i(v_i)=E_{b_i\sim
D_i(v_i)}E_{v_{-i}\sim Q_{v_i}}E_{b_{-i}\sim
D_{-i}}[v_i(S_i)-\sum_{j\in S_i}b_{i,j}]$, where $S_i$ is the set of
items that player $i$ wins with the set of bids $b$. Let $u_i$ be
the expected utility of player $i$, i.e., $E_{v_i\sim Q}[u_i(v_i)]$.
We get that,
\begin{eqnarray*}
\lefteqn{u_i(v_i) \geq} \\
& & E_{w_{-i}\sim Q_{v_i}}[v_i(OPT_i(v_i,w_{-i}))/(2\beta) -
\sum_{j\in OPT_i(v_i,w_{-i})} 2f_j(v_i)]
\end{eqnarray*}
Takin the expectation with respect to $v_i$,
\begin{eqnarray*}
\lefteqn{u_i=E_{v_i}[u_i(v_i)]}\\
& \geq& E_{v_i}E_{w_{-i}\sim Q_{v_i}}[v_i(OPT_i(v_i,w_{-i}))/(2\beta)\\
 && - \sum_{j\in OPT_i(v_i,w_{-i})}  2f_j(v_i)]\\
&=& E_{v\sim Q}[v_i(OPT_i(v))/(2\beta)] \\
& &- E_{v\sim Q}[ \sum_{j\in OPT_i(v)}  2f_j(v_i)]\\
&=& E_{v\sim Q}[v_i(OPT_i(v))/(2\beta)] \\
& &-2 E_{v\sim Q}[ \sum_{j\in M}I(j\in OPT_i(v))  f_j(v_i)],
\end{eqnarray*}
where $I(X)$ is the indicator function for the event $X$. Summing
over all the players
\begin{eqnarray*}
\sum_{i=1}^n u_i  & \geq & \sum_{i=1}^n
E_{v\sim Q}[v_i(OPT_i(v))/(2\beta)] \\
& &-2 \sum_{i=1}^n E_{v\sim Q}[ \sum_{j\in M}I(j\in OPT_i(v))
f_j(v_i)]\\
& = & E_{v\sim Q}[SW(OPT(v))/(2\beta)] \\
& &-2 \sum_{j\in M}  E_{v\sim Q}[\sum_{i=1}^n I(j\in OPT_i(v))
f_j(v_i)]
\end{eqnarray*}
%
%
Now we use the fact that the distribution $Q$ over the valuations is
a product distribution. This implies that for any valuation $v_i$,
we have the same value $f_j(v_i)$. Let $price(j)$ be the expected
price of item $j\in M$, i.e., $price(j)=E_{v\sim Q}E_{b\sim D}[$
$\max_{k} b_{k,j}]$.
Since $price(j)\geq f_j(v_i)$ for any buyer $i$ and valuation $v_i$,
\begin{eqnarray*}
\sum_{i=1}^n u_i
& \geq &  E_{v\sim Q}[SW(OPT(v))/(2\beta)] \\
& &-2 \sum_{j\in M} price(j) E_{v\sim Q}[\sum_{i=1}^n I(j\in
OPT_i(v)) ]\\
& =&  E_{v\sim Q}[SW(OPT(v))/(2\beta)] -2 \sum_{j\in M} price(j),
\end{eqnarray*}
where the last equality follows since item $j$ is always assigned to
some buyer, therefore, for any $v$, we have $\sum_{i=1}^n I(j\in
OPT_i(v))=1$.

Let $sw(D)$ be the expected social welfare of the Bayesian Nash $D$.
Note that $\sum_{i=1}^n u_i = sw(D) - \sum_{j\in M} price(j)$.
Therefore,
\begin{eqnarray*}
\lefteqn{sw(D) - \sum_{j\in M} price (j)  \geq }\\ & & E_{v\sim
Q}[SW(OPT(v))/(2\beta)]
 -2 \sum_{j\in M} price(j),
\end{eqnarray*}
which implies that
\begin{eqnarray*}
2sw(D) \geq sw(D) + \sum_{j\in M} price (j) & \geq   E_{v\sim
Q}[SW(OPT(v))/(2\beta)].
\end{eqnarray*}
This implies that the PoA of the Bayesian equilibrium $D$ is at most
$4\beta$.
\end{proof}

\ignore{
\subsection{Better bound for fractionally subadditive valuations, with full information} \label{sec:delicateproof}

In this subsection we go back to the full information model, and assume that the valuations are $\beta$-fractionally subadditive.
We show first that each player has a ``safe strategy'' which
promises her at least some amount of utility, which depends on the
the strategies of the other players:

\begin{lemma}\label{single-player}
For every player $i$, there exists a bid vector $a_i$, such that if
$i$ bids $a_i$ and the other players bid $b_{\bf - i}$ then player
$i$'s utility is at least

\[u_i (a_i : b_{\bf -i}) \ge \frac{v_i(O_i)}{2\beta} - X_{b_{\bf -i}}(O_i)\]
where $X_{b_{\bf -i}}(O_i) = \sum_{j \in O_i} max_{k \neq i}
b_{k,j}$. Furthermore $a_i$ depends only on $O_i$ and not on $b_{\bf
-i}$.
\end{lemma}

\begin{proof}
For ease of notation, we will prove the lemma for player $1$, and
not for player $i$.

Since $v_1$ is $\beta$ fractionally sub-additive, applying Lemma
\ref{exist-a} to the set $O_1$ we get that there exists a vector
$\tilde a$ such that:

\begin{enumerate}
\item If $j \not \in O_1$, we have $\tilde a_j = 0$.
\item For every set $S \subset M$, we have $\sum_{j \in S} \tilde a_j \le v_1(S)$.
\item We have $\sum_{j \in O_1} \tilde a_j \ge v_1(O_1)/\beta$.
\end{enumerate}

Define a vector of bids $a$ by $a_j = \tilde a_j / 2$. We claim that
if player $1$ bids $a$, then
\[u_1 ({a: b_2, \ldots, b_n}) \ge \frac{v_1(O_1)}{2\beta} - X_{b_{\bf -1}}(O_1)\]

Let $W_1$ denote the set of items player $1$ gets if player $1$
plays $a$, and for any $k > 1$, player $k$ plays $b_k$. Note $W_1
\subset O_1$, since $a_j = 0$ if $j \not \in O_1$. We have:

\begin{eqnarray*}
u_1(a: b_2, \ldots ,b_n) &=& v_1(W_1) - \sum_{j \in W_1} a_j \\ &\stackrel{1}{\ge}& \sum_{j \in W_1} a_j \\
&\stackrel{2}{\ge}& \sum_{j \in W_1} a_j + \sum_{j \in O_1 - W_1}
\left( a_j - \max_{k>1} b_{k,j} \right) \\&=& \sum_{j \in O_1} a_j -
\sum_{j \in O_1 - W_1} \max_{k>1} b_{k,j} \\&\ge& \sum_{j \in O_1}
a_j - \sum_{j \in O_1} \max_{k>1} b_{k,j} \\ &\stackrel{3}{\ge}&
\frac{v_1(O_1)}{2\beta} - \sum_{j \in O_1} \max_{k>1} b_{k,j} \\ &=&
\frac{v_1(O_1)}{2\beta} - X_{b_{\bf -1}}(O_1)
\end{eqnarray*}

Inequality ($1$) follows since $a_j = \tilde a_j/2$, and $\sum_{j
\in W_1} \tilde a_j \le v_1(W_1)$, Inequality ($2$) uses the fact
that if player $1$ didn't win an item $j$, then $a_j \le \max_{k>1}
b_{k,j}$, and Inequality ($3$) uses that $\sum_{j \in O_1} \tilde
a_j \ge v_1(O_1)/\beta$, and $a_j = \tilde a_j/2$.

\end{proof}

We are now ready to prove the theorem. Let  $\sigma_i$ denotes the
strategy of player $i$ in the Nash equilibrium. Let $\myprob_i(b_i)$
be the probability that player $i$ bids the vector $b_i$ according
to strategy $\sigma_i$. We have

\begin{eqnarray*}
u_i(\sigma_i : \sigma_{\bf -i}) &\stackrel{1}{\ge}&  u_i(a_i :
\sigma_{\bf -i}) \\ &=& \sum_{b_{\bf -i}} \Pi_{j \neq i}
\myprob_j(b_j) u_i(a_i : b_{\bf -i}) \\ & \stackrel{2}{\ge} &
\sum_{b_{\bf -i}} \Pi_{j \neq i} \myprob_j(b_j)
\left(\frac{v_i(O_i)}{2 \beta} - X_{b_{\bf -i}}(O_i)\right) \\ &=&
\frac{v_i(O_i)}{2 \beta} - \sum_{b_{\bf -i}} \Pi_{j \neq i}
\myprob_j(b_j) X_{b_{\bf -i}}(O_i) \\ &\stackrel{3}{\ge} &
\frac{v_i(O_i)}{2 \beta} - \sum_{b_1, \ldots b_n} \Pi_{j}
\myprob_j(b_j) \sum_{\ell \in O_i} \max_{k \in N} b_{k,\ell}
\end{eqnarray*}
where Inequality ($1$) uses the fact that since it's a Nash
equilibrium, strategy $\sigma_i$ gives more utility to player $i$
than $a_i$, Inequality ($2$) follows from  Lemma
\ref{single-player}, and Inequality ($3$) follows since in its right
hand side we maximize over more bid vectors.

Let
\[R = \sum_{b_1, \ldots, b_n} \Pi_{j}
\myprob_j(b_j) \sum_{j=1}^{m} \max_{k} b_{k,j}\] be the expected
revenue of the mechanism when all players play their Nash
equilibrium strategies. Summing the new lower bound on $u_i(\sigma_i
: \sigma_{\bf -i})$ over all $n$ players, gives

\begin{eqnarray*}
\sum_i u_i(\sigma_i : \sigma_{\bf -i}) & \ge& \sum_i
\frac{v_i(O_i)}{2 \beta}  - \sum_{b_1, \ldots b_n} \Pi_{j}
\myprob_j(b_j) \sum_{\ell \in O_i} \max_{1 \le k \le n} b_{k,\ell} \\
&=& \frac{SW(O)}{2 \beta} - \sum_{b_1, \ldots b_n} \Pi_{j}
\myprob_j(b_j) \sum_{\ell \in M} \max_{k \in N} b_{k,\ell} \\ & = &
\frac{SW(O)}{2 \beta} - R
\end{eqnarray*}
where $SW(O)$ is the social welfare of the optimal solution.

Finally, using $SW(NASH) = R + \sum_i u_i(\sigma_i : \sigma_{\bf
-i})$ we get that $SW(NASH) > SW(O)/ 2 \beta$, as required.
}

\bibliographystyle{plain}
\bibliography{firstprice}

\end{document}